\documentclass[aps,prx,reprint,superscriptaddress,amsmath,amssymb,floatfix,longbibliography]{revtex4-2}

\usepackage[T1]{fontenc}
\usepackage[utf8]{inputenc}
\usepackage{CJKutf8}
\usepackage{lmodern}
\usepackage{microtype}
\usepackage{mathtools}
\usepackage{amsthm}
\usepackage{bm}
\usepackage{booktabs}
\usepackage{array}
\usepackage{graphicx}
\usepackage{xcolor}
\usepackage{physics}
\usepackage[colorlinks=true,linkcolor=blue!50!black,citecolor=blue!50!black,urlcolor=blue!50!black]{hyperref}

\newcommand{\F}{\mathbb{F}}
\newcommand{\Z}{\mathbb{Z}}

\newcommand{\GSD}{\mathrm{GSD}}
\newcommand{\QtRCC}{\ensuremath{\mathrm{QtRCC}}}
\newcommand{\QtHC}{\ensuremath{\mathrm{QtHC}}}
\newcommand{\ex}{\hat{x}}
\newcommand{\ey}{\hat{y}}
\newcommand{\ez}{\hat{z}}

\newtheorem{theorem}{Theorem}
\newtheorem{lemma}{Lemma}

\begin{document}
\begin{CJK*}{UTF8}{gbsn}
		
\title{Random Local Stabilizer Codes in Three Dimensions without String or Self-Similar Fractal Logical Operators}
		
\author{Han Yan (闫寒)}
\email{hanyan@issp.u-tokyo.ac.jp}
\affiliation{Institute for Solid State Physics, University of Tokyo, Kashiwa, Chiba 277-8581, Japan}

\date{\today}
		
\begin{abstract}
		Quantum error-correcting codes (QECs) are essential components of quantum computation and have deep connections to quantum phases of matter.
		A key obstruction to passive self-correcting QECs is the presence of string logical operators, which can generate logical errors through constant-energy-barrier processes.
		Haah's Codes (fracton codes) showed that  three-dimensional stabilizer codes can forbid such string logical operators, but their translation-invariant structure supports self-similar fractal logical operators with a logarithmic energy barrier.
		We introduce the qutrit random cubic codes, a family of local qutrit Calderbank-Shor-Steane stabilizer Hamiltonians with similar cube-check structure as Haah's Code 1 but built from spatially varying stabilizers.
		We prove that these models retain the no-string property and numerically observe that they have properties distinct from translation-invariant fracton codes: the smallest ground-state degeneracy exponent is $k=2$ for odd $L$ and $k=4$ for even $L$; noncontractible plane-logical operators span the entire logical space; and charge-push diagnostics show that the self-similar fractal operators are absent. 
		These results demonstrate that constrained randomness can fundamentally change the nature of stabilizer codes and improve their self-correction properties. 
		They further point to broader families of quantum error-correcting codes and quantum phases beyond canonical topological and fracton orders.
	\end{abstract}
	
	\maketitle
\end{CJK*}

\section{Introduction}

Self-correcting quantum error-correcting codes (QECs) aim to protect logically encoded quantum information through the equilibrium and dynamical properties of a many-body Hamiltonian, without repeated syndrome extraction and feedback.
They are a key component of fault-tolerant quantum computing.
Stabilizer Hamiltonians have played a central role in constructing QECs~\cite{CalderbankShor1996,Steane1996,Gottesman1997,AlbertFaist2026}.
The toric code established the topological-code paradigm~\cite{Kitaev2003,Dennis2002}, but in two dimensions its logical operators are strings and its energy barrier is constant, making it unsuitable for self-correcting applications.
More generally, two-dimensional local stabilizer memories obey strong no-go and parameter-tradeoff constraints~\cite{BravyiTerhal2009,BravyiPoulinTerhal2010,Terhal2015,Brown2016}.
Three spatial dimensions are therefore the first physically natural arena in which a local stabilizer Hamiltonian might avoid string-driven thermal failure.

Haah's Code 1 (a.k.a. Haah's code, fracton code) was a breakthrough: it is a three-dimensional stabilizer code that can be rigorously proven to have no string logical operators~\cite{Haah2011}.
Its cube-supported Calderbank-Shor-Steane (CSS) checks prevent excitations from being moved by finite-width strings; instead, charges and logical operators are organized by algebraic self-similar fractal recurrences.
This behavior is now understood through energy-barrier scaling, partial self-correction, Laurent-polynomial methods, and the broader theory of fracton order~\cite{Chamon2005,BravyiLeemhuisTerhal2011,BravyiHaah2011,Kim2012,BravyiHaah2013, Haah2013,HaahThesis2013,Yoshida2013,Williamson2016,DuaKimChengWilliamson2019, DuaSarkarWilliamsonCheng2020,VijayHaahFu2015,VijayHaahFu2016, NandkishoreHermele2019,PretkoChenYou2020,AitchisonBulmashDuaDohertyWilliamson2024}, including coupled-layer and foliated descriptions~\cite{MaLakeChenHermele2017,SlagleKim2018,ShirleySlagleWangChen2018, ShirleySlagleChen2019,AasenBulmashPremSlagleWilliamson2020, DevakulWilliamson2021,WilliamsonDevakul2021,SongDuaShirleyWilliamson2023}, tensor gauge theories and multipole conservation laws~\cite{Pretko2017Subdimensional,Pretko2017Generalized,SlagleKim2017, BulmashBarkeshli2018Higgs,BulmashBarkeshli2018FractalDynamics,Gromov2019,LeeHuChoWatanabe2025}, and related condensed-matter settings~\cite{KimHaah2016,PremHaahNandkishore2017,Yan2019HyperbolicFracton, Yan2019HyperbolicFractonII,Yan2020GeodesicString,YanBentonJaubertShannon2020, HeringYanReuther2021,YanSlagleNevidomskyy2022,YanNevidomskyy2024}.
However, in translation-invariant fracton codes such as Haah's Code 1, there exist self-similar fractal operators and associated arithmetic finite-size effects in the ground-state degeneracy, and only logarithmic energy barrier scaling.

Other routes toward more robust three-dimensional QECs use more engineered architectures.
Welded solid codes obtain a polynomial barrier by breaking microscopic translation invariance~\cite{Michnicki2014}; layer-code and defect-network constructions use coupled lower-dimensional systems and defect junctions to approach optimal three-dimensional local-code scaling~\cite{AasenBulmashPremSlagleWilliamson2020,SongDuaShirleyWilliamson2023,WilliamsonBaspin2024}; and product-code, cored-product-code, and recursive constructions provide new routes to fracton models and passive-memory candidates~\cite{TanRobertsTantivasadakarnYoshidaYao2025,RobertsKohTanYao2025,BalasubramanianDavydovaLin2026}.
These examples show that geometry, hierarchy, and nonuniformity can change memory behavior.

Here we explore a different route that keeps the cubic lattice geometry but uses constrained random stabilizers that break translation symmetry.
We study a qutrit generalization of Haah's Code 1 with the same cube-check support, but generalized to qutrits.
Remarkably, the commutation and topology constraints no longer fully determine the cubic stabilizers.
They still impose strong algebraic conditions, but leave enough freedom to vary the stabilizers in space beyond trivial unitary rotations.
The resulting family is the \textit{Qutrit Random Cubic Codes}, denoted \QtRCC.

Our exact analytical result is the same no-string theorem: all cases in \QtRCC{}  have no fixed-width logical string segments defined by Haah.
The proof follows the same strategy of Haah's original proof, which is to deform/erase the support of a candidate string operator, until it can be broken into two disconnected pieces.

We then study \QtRCC{} numerically, using a spatially uniform qutrit Haah's code reference model, denoted \QtHC{}, as the translation-invariant benchmark.
Among sampled models at $10\le L\le25$, the smallest observed ground-state degeneracy exponent is stably $k=2$ for odd $L$ and $k=4$ for even $L$ for \QtRCC{}, in sharp contrast with the arithmetic finite-size dependence of \QtHC{}.
Plane-supported logical operators span the entire logical operator space, while no proper axis-aligned tube logical is detected even at width $L-1$.
Charge-push diagnostics further show that the power-of-three self-similar fractal operator exhibited by \QtHC{} is absent in representative \QtRCC{} instances.
These results suggest that the \QtRCC{} can have improved (partial) self-correcting properties compared to the transnational-symmetric fracton codes. 

The broader implication is that constrained randomness can fundamentally change the behavior of stabilizer QECs.
This points to new families of random stabilizer codes and disorder-modified topological memories, as well as spatially nonuniform fracton and subsystem-symmetry frameworks. 
As a model of condensed matter, \QtRCC{} combines finite-torus ground-state degeneracy with local fracton-like charge dynamics, raising new challenges in understanding and classifying spatially nonuniform stabilizer phases.

The paper is organized as follows.
Section~\ref{sec:model} defines \QtRCC{} and derives the topology and commutation constraints.
Section~\ref{sec:nostringmain} states the no-string theorem and sketches the proof, with its complete version presented in Appendix~\ref{app:nostringproof}.
Section~\ref{sec:numerics} presents ground-state degeneracy and logical-operator diagnostics on finite periodic lattices.
Section~\ref{sec:no.fractal} studies charge-push diagnostics and reports the absence of the   power-of-three fractals in representative sampled instances.
Section~\ref{sec:discussion} summarizes the implications and open problems.

\section{The Qutrit Random Cubic Code}
\label{sec:model}

\subsection{The Check Support Structure}

\begin{figure}[th!]
	\includegraphics[width=\columnwidth]{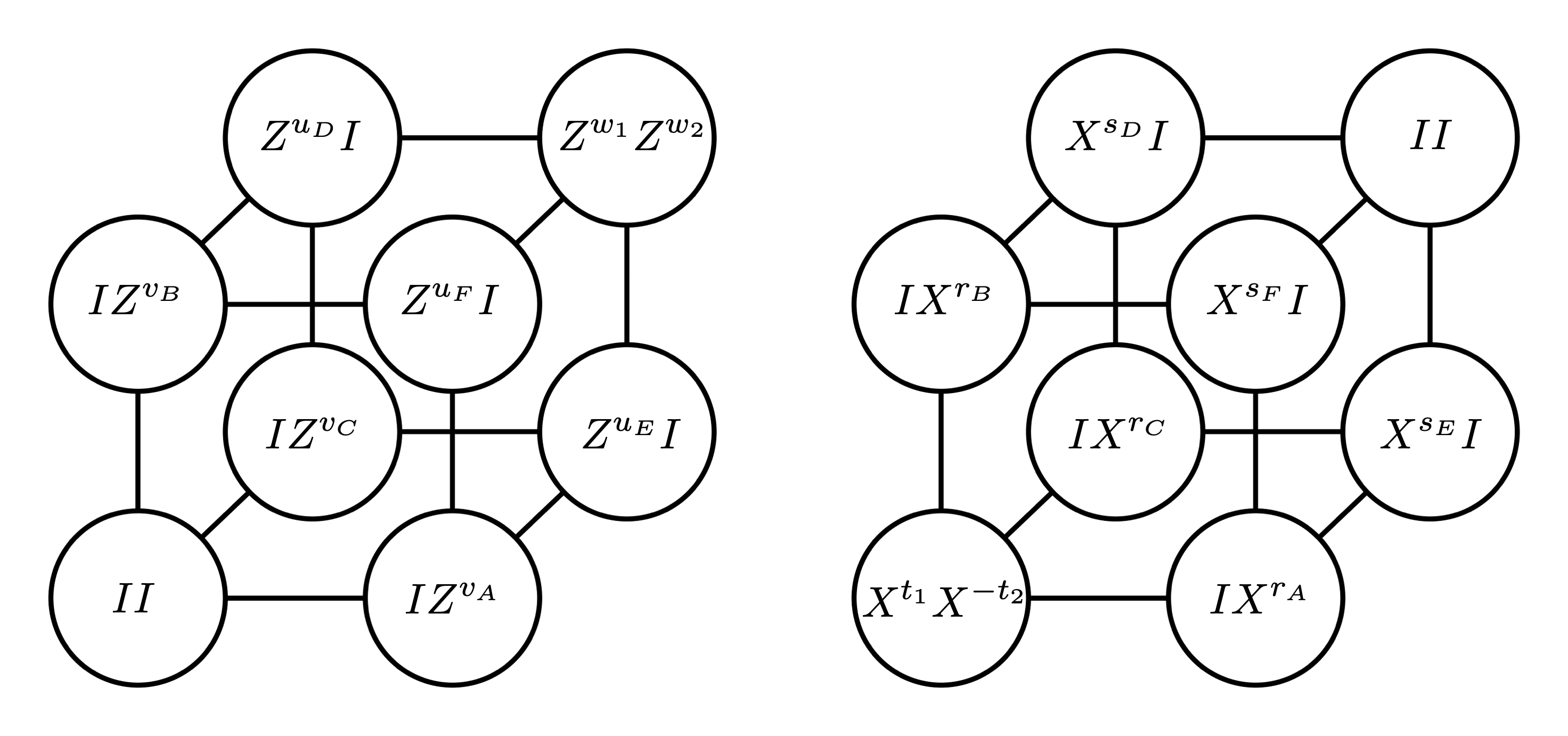}
	\caption{Local cube-check structure for Qutrit Random Cubic Codes (\QtHC).
		The left panel shows the $Z$-check $A_r$ and the right panel shows the reflected $X$-check $B_r$.
		Each cube corner carries two qutrits, and the corresponding operators are displayed.
		With the corner convention of Eq.~\eqref{eq:corners}, the $Z$-check has no support at $O$ and two-qutrit support at $G$, while the $X$-check has two-qutrit support at $O$ and no support at $G$.
		The remaining six corners are one-operator corners.}
	\label{fig:checks}
\end{figure}

We use qutrits and their Pauli operators throughout~\cite{Gottesman1998,AshikhminKnill2001,KetkarKlappeneckerKumarSarvepalli2006}.
A three-level qutrit has computational basis $\{\ket{a}:a\in\F_3\}$, equivalently $a\in\mathbb Z/3\mathbb Z$.
We use the conventional definition
\begin{equation}
	X\ket{a}=\ket{a+1},\qquad Z\ket{a}=\omega^a\ket{a},\qquad
	\omega=e^{2\pi i/3}.
\end{equation}
All additions and Pauli exponents are understood in $\F_3$.
The Pauli operators obey the commutation relation
\begin{equation}
	Z^zX^x=\omega^{zx}X^xZ^z,\qquad x,z\in\F_3 .
\end{equation}
Thus a CSS $Z$ operator with exponent vector $\vb* z$ commutes with a CSS $X$ operator with exponent vector $\vb* x$ if and only if
\begin{equation}
	\vb* z\cdot\vb*  x=\sum_j z_jx_j=0\pmod 3 .
	\label{eq:csscomm}
\end{equation}

For finite-size calculations we use the periodic cubic lattice $\Lambda_L=(\Z/L\Z)^3$, where $\Z$ denotes the integers; equivalently, $\Lambda_L$ is a finite-size lattice with periodic boundary conditions.
Two qutrits are placed at each site $r$, denoted $(r,1)$ and $(r,2)$, so the periodic lattice contains $n=2L^3$ qutrits.
The same local check formulae are also used on the infinite lattice $\Z^3$ when discussing finite-support operators.
We write $\ex,\ey,\ez$ for the positive unit lattice vectors.

For a cube anchored at $r$, we label the eight corners by $r+\mu$, with $\mu\in\{O,A,B,C,D,E,F,G\}$, where
\begin{equation}
	\begin{split}
		&O=(0,0,0),\quad G=(1,1,1), \\
		& A=(1,0,0),\quad B=(0,1,0),\quad C=(0,0,1),\\
		&D=(0,1,1),\quad E=(1,0,1),\quad F=(1,1,0).
	\end{split}
	\label{eq:corners}
\end{equation}
Thus $O$ is the cube anchor and $G=O+\ex+\ey+\ez$ is the opposite body-diagonal corner.
The corners $A,B,C$ are the three positive coordinate neighbors of $O$, and $D,E,F$ are the three face-diagonal corners adjacent to $G$.

We place one $Z$-check $A_r$ and one $X$-check $B_r$ for each cube anchored at $r$.
Figure~\ref{fig:checks} illustrates their local supports and coefficient structure.
Let the nonzero exponent field of the $Z$-check at anchor $r$ be
\begin{equation}
	\mathbf a_r=(w_{1,r},w_{2,r},u_{D,r},u_{E,r},u_{F,r},v_{A,r},v_{B,r},v_{C,r})
	\in(\F_3^*)^8,
\end{equation}
where $\F_3^*=\{1,2\}$.
The $Z$-check is defined as the product of $Z$ operators on the corners
\begin{align}
	A_r={}&Z_{r+G,1}^{w_{1,r}}Z_{r+G,2}^{w_{2,r}}
	Z_{r+D,1}^{u_{D,r}}Z_{r+E,1}^{u_{E,r}}Z_{r+F,1}^{u_{F,r}}
	\nonumber\\
	&\times Z_{r+A,2}^{v_{A,r}}Z_{r+B,2}^{v_{B,r}}Z_{r+C,2}^{v_{C,r}} .
	\label{eq:Zcheck}
\end{align}
The $O$ corner has only identity operators.

Similarly, let
\begin{equation}
	\mathbf b_r=(t_{1,r},t_{2,r},r_{A,r},r_{B,r},r_{C,r},s_{D,r},s_{E,r},s_{F,r})
	\in(\F_3^*)^8
\end{equation}
be the nonzero exponent field of the $X$-check at anchor $r$.
The $X$-check is defined as
\begin{align}
	B_r={}&X_{r+O,1}^{t_{1,r}}X_{r+O,2}^{-t_{2,r}}
	X_{r+A,2}^{r_{A,r}}X_{r+B,2}^{r_{B,r}}X_{r+C,2}^{r_{C,r}}
	\nonumber\\
	&\times X_{r+D,1}^{s_{D,r}}X_{r+E,1}^{s_{E,r}}X_{r+F,1}^{s_{F,r}} .
	\label{eq:Xcheck}
\end{align}
The $G$ corner has only identity operators.
The minus sign in the exponent of $X_{r+O,2}$ is a finite-field sign: $-t_{2,r}=2t_{2,r}\in\F_3$.

For later use, we identify each CSS check on the periodic lattice with its exponent vector in $\F_3^{2L^3}$.
The matrices $H_Z$ and $H_X$ have one row for each cube anchor: $H_Z(r,\cdot)$ is the exponent vector of $A_r$, and $H_X(r,\cdot)$ is the exponent vector of $B_r$.

For qutrit stabilizer checks, replacing a check by its square multiplies all exponents in that check by the nonzero scalar $2\in\F_3$, or equivalently by $-1$.
This leaves the stabilizer group unchanged.
	
The support structure closely resembles the cubic check support pattern used in Haah's Code 1, with qubits replaced by qutrits and with each nonzero exponent (which can only be 1) replaced by a nonzero field element of $\F_3$.
More specifically, the $Z$-check has no support at $O$ and two-qutrit support at $G$, while the $X$-check has the reflected placement: two-qutrit support at $O$ and no support at $G$.
The six remaining corners are one-qutrit corners: $A,B,C$ use the second qutrit and $D,E,F$ use the first qutrit, matching the reflected cubic support structure of Haah's Code 1.

\subsection{Topology and commutation constraints}
\label{sec:constraints}

Equations~\eqref{eq:Zcheck} and \eqref{eq:Xcheck} define a local cube-check structure.
They do not by themselves define a stabilizer code yet, because arbitrary choices of the nonzero coefficients need not commute.
In this subsection we impose two additional requirements: a topology identity for the $Z$-checks and the local CSS commutation equations between $A$ and $B$ checks.
Coefficient fields satisfying these requirements will be called \textit{admissible}.
The random collection of models studied in this work is sampled only from this admissible set.

On the periodic lattice $\Lambda_L$ we impose two constraints.
First, the product of all $Z$-checks is required to be identity.
This gives one global relation among the $Z$-checks and therefore prevents the $2L^3$ local checks from fixing a unique stabilizer state, so that ground-state degeneracy is guaranteed.
Second, every $Z$-check must commute with every $X$-check.
Both conditions are equations for the local qutrit Pauli exponents.

The global $Z$-stabilizer relation is
\begin{equation}
\prod_{r\in\Lambda_L}A_r=I .
\label{eq:topprod}
\end{equation}
Equivalently, the sum of the exponent vectors of all $A_r$ vanishes in $\F_3^{2L^3}$.
Hence the rows of $H_Z$ obey at least one linear relation, so $\rank(H_Z)\le L^3-1$.
For the resulting commuting CSS stabilizer group, there are $2L^3$ physical qutrits and at most $L^3$ independent $X$-checks, so this imposed $Z$-check dependence guarantees
\begin{equation}
k\equiv \log_3\GSD=2L^3-\rank(H_Z)-\rank(H_X)\ge 1 .
\end{equation}
Here $\GSD$ denotes the ground-state degeneracy.
We do not impose a corresponding global relation on the $X$-checks; one stabilizer dependence is sufficient to obtain a nonzero $k$.

The identity \eqref{eq:topprod} holds if and only if the total $Z$ exponent contributed by different $Z$-checks on each physical qutrit vanishes.
In the indexing convention of Eq.~\eqref{eq:Zcheck}, the first qutrit receives contributions from one $G$-corner coefficient and from the $D,E,F$ corners of three neighboring cubes; the second qutrit receives contributions from one $G$-corner coefficient and from the $A,B,C$ corners of three neighboring cubes.
This gives two equations at every site:
\begin{equation}
\begin{aligned}
w_{1,r}+u_{D,r+\ex}+u_{E,r+\ey}+u_{F,r+\ez}= 0,\\
w_{2,r}+v_{A,r+\ey+\ez}
     +v_{B,r+\ex+\ez}
     +v_{C,r+\ex+\ey}= 0.
\end{aligned}
\label{eqn:topo.constraint}
\end{equation}
These are finite-field equations with all coefficient variables restricted to the nonzero set $\F_3^*=\{1,2\}$.

We next impose local CSS commutation.
By Eq.~\eqref{eq:csscomm}, every overlapping pair of a $Z$-check and an $X$-check must commute.
Fix an $X$-check $B_r$.
Among the $27$ neighboring $Z$-checks, $14$ produce nontrivial constraints on the coefficient fields.
The raw finite-field equations are listed in Appendix~\ref{app:commutation}.

Replacing a stabilizer check by its square rescales all its exponents by the nonzero scalar $2\in\F_3$ and leaves the stabilizer group unchanged.
We use this check normalization by choosing $t_{1,r}\in\F_3^*$, after which seven of the local commutation equations fix the remaining $X$-check exponents in terms of nearby $Z$-check exponents.

Define the local ratios
\begin{equation}
\begin{aligned}
\alpha_D(r)&=\frac{u_{D,r+\ey+\ez}}{w_{2,r+\ey+\ez}},
&
\alpha_E(r)&=\frac{u_{E,r+\ex+\ez}}{w_{2,r+\ex+\ez}},
\\
\alpha_F(r)&=\frac{u_{F,r+\ex+\ey}}{w_{2,r+\ex+\ey}},
&
\beta_A(r)&=\frac{v_{A,r+\ex}}{w_{1,r+\ex}},
\\
\beta_B(r)&=\frac{v_{B,r+\ey}}{w_{1,r+\ey}},
&
\beta_C(r)&=\frac{v_{C,r+\ez}}{w_{1,r+\ez}},
\\
\rho(r)&=\frac{w_{1,r+\ex+\ey+\ez}}{w_{2,r+\ex+\ey+\ez}} .
\end{aligned}
\label{eq:alpha_beta_rho}
\end{equation}
The seven anchoring commutation equations give
\begin{equation}
\begin{aligned}
r_{A,r}&=-\alpha_D(r)t_{1,r},
&
r_{B,r}&=-\alpha_E(r)t_{1,r},
\\
r_{C,r}&=-\alpha_F(r)t_{1,r},
&
s_{D,r}&=\rho(r)\beta_A(r)t_{1,r},\\
s_{E,r}&=\rho(r)\beta_B(r)t_{1,r},
&
s_{F,r}&=\rho(r)\beta_C(r)t_{1,r},\\
t_{2,r}&=\rho(r)t_{1,r}.
\end{aligned}
\label{eq:derive2}
\end{equation}
Thus, once the $Z$-check coefficient field and the local normalization $t_{1,r}$ are fixed, all exponents in $X$-checks $B_r$ are fixed by local commutation.

There are seven more nontrivial $A$-$B$ commutation equations.
Substituting Eq.~\eqref{eq:derive2} into them gives seven constraints on the $Z$-check coefficient field alone:
\begin{equation}
\begin{aligned}
 &u_{D,r-\ex+\ez}\rho(r)\beta_C(r)
    -v_{C,r-\ex+\ez}\alpha_D(r)=0,\\
 &u_{D,r-\ex+\ey}\rho(r)\beta_B(r)
    -v_{B,r-\ex+\ey}\alpha_D(r)=0,\\
 &u_{E,r-\ey+\ez}\rho(r)\beta_C(r)
    -v_{C,r-\ey+\ez}\alpha_E(r)=0,\\
 &u_{F,r+\ey-\ez}\rho(r)\beta_B(r)
    -v_{B,r+\ey-\ez}\alpha_F(r)=0,\\
 &u_{E,r+\ex-\ey}\rho(r)\beta_A(r)
    -v_{A,r+\ex-\ey}\alpha_E(r)=0,\\
 &u_{F,r+\ex-\ez}\rho(r)\beta_A(r)
    -v_{A,r+\ex-\ez}\alpha_F(r)=0,\\
 &u_{D,r}\rho(r)\beta_A(r)
    +u_{E,r}\rho(r)\beta_B(r)
    +u_{F,r}\rho(r)\beta_C(r)\\
&\quad
    -v_{A,r}\alpha_D(r)
    -v_{B,r}\alpha_E(r)
    -v_{C,r}\alpha_F(r)=0.
\end{aligned}
\label{eqn:com.constraint}
\end{equation}
The first six equations are two-term edge-overlap constraints.
The final equation is the same-cube commutation condition between $A_r$ and $B_r$.
As before, all coefficient variables appearing in these nonlinear equations are required to be nonzero.

On the periodic lattice we call a $Z$-coefficient field \textit{admissible} if all coefficients are nonzero and Eqs.~\eqref{eqn:topo.constraint} and \eqref{eqn:com.constraint} hold at every site.
Given such a field and a choice of nonzero check normalizations $t_{1,r}$, Eqs.~\eqref{eq:alpha_beta_rho} and \eqref{eq:derive2} define nonzero $X$-check coefficients, and the resulting CSS stabilizer checks commute.
On $\Z^3$ there is no finite global product relation corresponding to Eq.~\eqref{eq:topprod}.
Nevertheless, we impose the local coefficient equations \eqref{eqn:topo.constraint} as part of the formal definition of a locally admissible coefficient field.
For formal infinite-lattice statements, admissible means that all coefficients are nonzero and Eqs.~\eqref{eqn:topo.constraint} and \eqref{eqn:com.constraint} hold at every site of $\Z^3$.

For an admissible coefficient field on $\Lambda_L$, the checks commute and define the qutrit stabilizer Hamiltonian
\begin{equation}
H_{\QtRCC}=-\sum_{r\in\Lambda_L}\left(A_r+A_r^\dagger+B_r+B_r^\dagger\right).
\label{eq:hamiltonian}
\end{equation}
The ground space is the common $+1$ eigenspace of all $A_r$ and $B_r$.
Equivalently, for any order-three Pauli stabilizer $S$, the projector onto its $+1$ eigenspace is
\begin{equation}
\Pi(S)=\frac{I+S+S^2}{3},
\end{equation}
and $S+S^\dagger=3\Pi(S)-I$.
Thus Eq.~\eqref{eq:hamiltonian} has the same ground space as $-\sum_S\Pi(S)$, or equivalently as the positive penalty Hamiltonian $\sum_S[I-\Pi(S)]$, up to an additive constant and an overall positive scale.

For the corresponding qubit model, this coefficient freedom is absent: every nonzero Pauli exponent is equal to one in $\F_2$.
This recovers Haah's Code 1.
Remarkably, for qutrit models, the two nonzero exponents $1$ and $2$ allow spatially varying $X$-check and $Z$-check coefficients while satisfying the topology and commutation equations, and allow one to construct many unitarily inequivalent models.
This is the key coefficient freedom underlying the \QtRCC{} construction.

\subsection{The random and uniform models}
\label{sec:qtrccqthc}

\begin{figure}[th!]
``\includegraphics[width=\columnwidth]{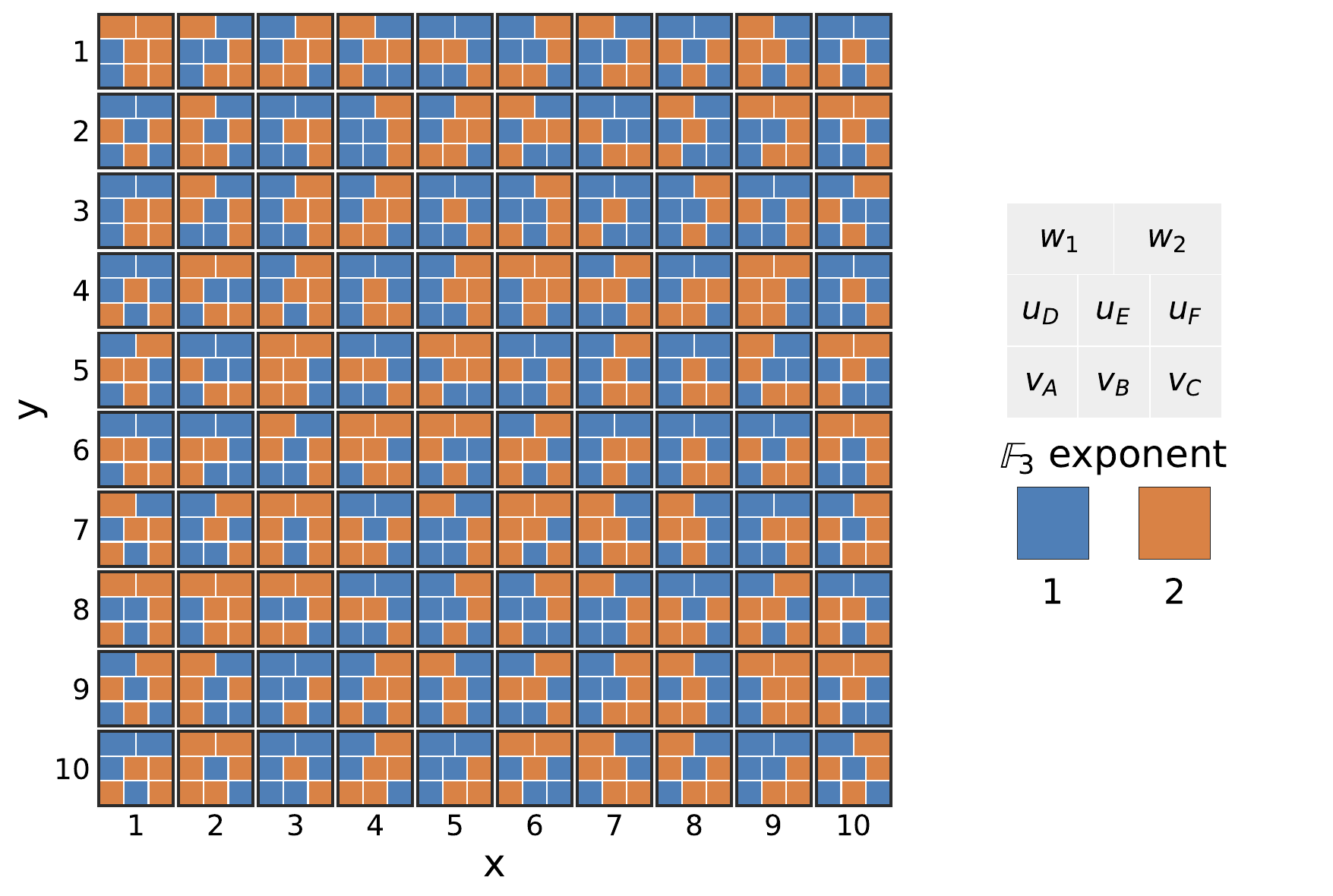}
\caption{Example of an admissible $Z$-check coefficient field.
The figure shows the $z=0$ slice of the curated $L=10$ example \textsf{L10\_model\_01}~\cite{QTRCCRepository}.
Each square is one cube anchor and contains a compact glyph for the eight nonzero coefficients $(w_1,w_2,u_D,u_E,u_F,v_A,v_B,v_C)$ of the local $Z$-check.
Blue denotes the finite-field value $1\in\F_3$, and orange denotes $2\in\F_3$.
With translation invariance along the $[111]$ direction, this plot defines the stabilizer code unambiguously.}
\label{fig:L10zslice}
\end{figure}

The preceding subsection defines the admissible coefficient fields.
A finite-lattice \QtRCC{} example is obtained by choosing a nonzero $Z$-check coefficient field $\{\mathbf a_r\}_{r\in\Lambda_L}$ satisfying the topology equations \eqref{eqn:topo.constraint} and the reduced commutation constraints \eqref{eqn:com.constraint}.
After fixing the conventional normalization of each $X$-check, the coefficients of $B_r$ are then determined locally by Eqs.~\eqref{eq:alpha_beta_rho} and \eqref{eq:derive2}.
Thus the random degrees of freedom in \QtRCC{} are admissible finite-field coefficient fields; arbitrary independent choices of the $X$ and $Z$ coefficient fields do not in general define a commuting stabilizer Hamiltonian.
Throughout this paper, \QtRCC{} refers to the full locally admissible constrained-random family. 

In the numerical work, fully three-dimensional generation of random fields under these nonlinear constraints remains computationally challenging.
We therefore use a line-symmetric subcollection of periodic-lattice admissible fields, which is tractable at the lattice sizes studied here.
Specifically, the $Z$-check coefficient field is taken to be invariant under translation along the $[111]$ direction,
\begin{equation}
\mathbf a_{r+\ell(\ex+\ey+\ez)}=\mathbf a_r,
\qquad \ell\in\Z/L\Z .
\label{eq:111}
\end{equation}
Equivalently, the eight coefficients $w_{1,r}$, $w_{2,r}$, $u_{D,r}$, $u_{E,r}$, $u_{F,r}$, $v_{A,r}$, $v_{B,r}$, and $v_{C,r}$ are constant along each $[111]$ line.
The derived $X$-check coefficient field has the same line symmetry.
This restriction allows spatial variation in the two transverse directions.
Figure~\ref{fig:L10zslice} shows one $z=0$ slice of the $Z$-check coefficient field for an $L=10$ example~\cite{QTRCCRepository}.

We also use a spatially uniform qutrit cubic reference model, denoted \QtHC{}.
It is the qutrit analogue of Haah's Code 1, with the same local cube support as \QtRCC{} and with all finite-field coefficients independent of the anchor $r$. 
One convenient uniform solution for the $Z$-check coefficients is
\begin{equation}
\begin{aligned}
w_1&=w_2=u_D=v_A=1,\\
u_E&=u_F=v_B=v_C=2 .
\end{aligned}
\label{eq:qthcA}
\end{equation}
The corresponding uniform $X$-check coefficients are
\begin{equation}
\begin{aligned}
t_1&=t_2=r_B=r_C=s_D=1,\\
r_A&=s_E=s_F=2 .
\end{aligned}
\label{eq:qthcB}
\end{equation}
This spatially uniform qutrit model serves as the translation-invariant cubic reference for comparisons with the constrained-random \QtRCC{} collection of models.

\section{Absence of string logical operators in QtRCC}
\label{sec:nostringmain}

\begin{figure}[th!]
\includegraphics[width=\columnwidth]{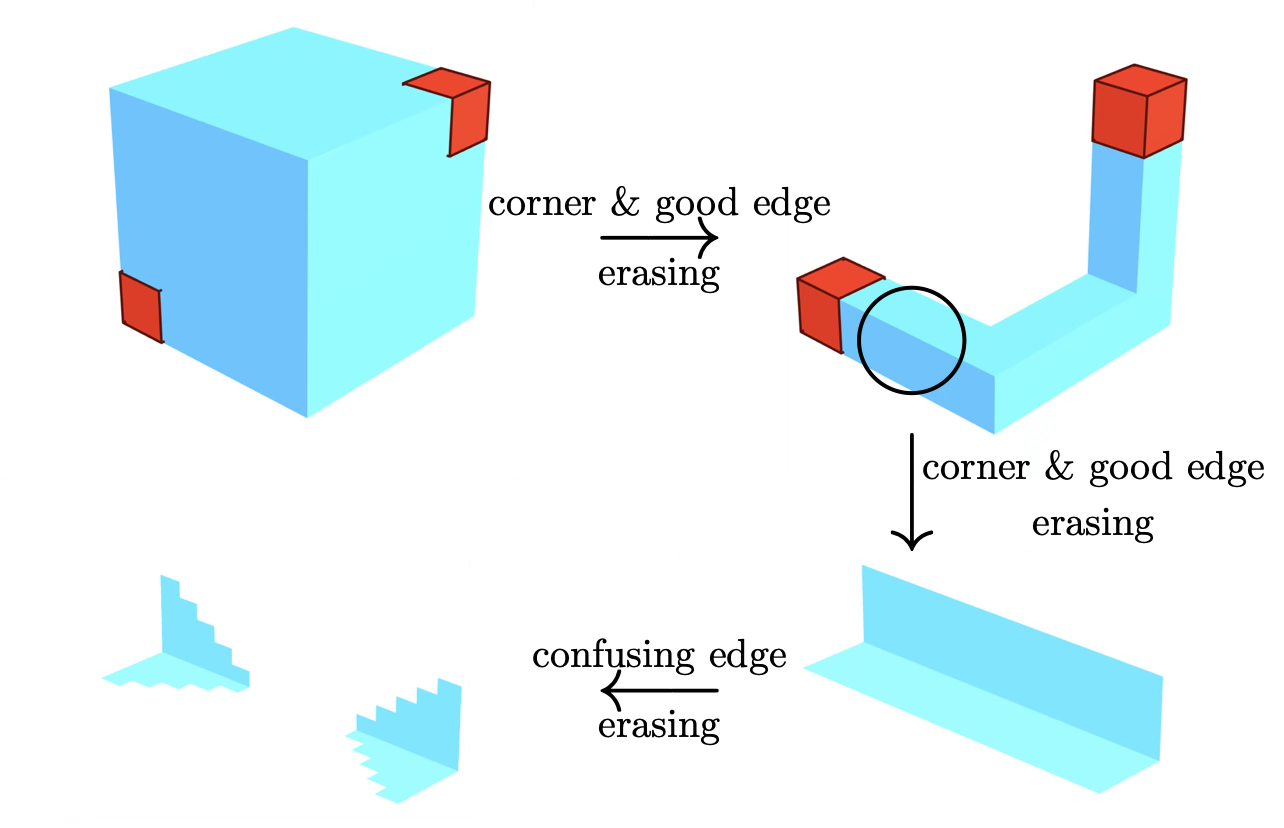}
\caption{Schematic of the no-string proof strategy.
A large connected support has stabilizer-violation regions shown as red cubes.
Corner and good-edge erasing deform the representative to an axis-aligned corridor connecting the same endpoint charges.
Further erasing gives the two flat rectangles.
The confusing constraint then removes rows from the exposed edges: the white erased interval is longest on the outer row and shortens row by row as the induction moves inward, eventually cutting the flat segment open if it is sufficiently long.
The disconnected representative contradicts nontriviality of a sufficiently long fixed-width logical string segment.}
\label{fig:nostringproof}
\end{figure}

This section states the no-string theorem and summarizes why Haah's Code 1 proof extends to \QtRCC{}~\cite{Haah2011}.
The proof outline is essentially the same as  Haah's original proof.
The new step is to verify that each local algebraic ingredient of that proof remains valid when the binary coefficients are generalized to nonzero, spatially varying elements of $\F_3$.
The detailed full proof is given in Appendix~\ref{app:nostringproof}.

We use Haah's definition of a logical string segment~\cite{Haah2011}.
A width-$w$ logical string segment is a finite Pauli operator $P$ together with two congruent width-$w$ anchor cubes $\Omega_1$ and $\Omega_2$.
All stabilizer checks away from the anchors commute with $P$, so any violated checks are confined to the two anchors.
Two such representatives are equivalent if they differ by multiplication by finitely many stabilizer checks.
The segment is nontrivial if every equivalent representative remains connected between the two anchors, and trivial if multiplication by a finite collection of stabilizer checks makes it disconnected.
If $\phi(w)$ is the maximum length of a nontrivial width-$w$ segment, then absence of string logical operators means $\phi(w)<\infty$ for every fixed finite width $w$.
Equivalently, when the two anchors are sufficiently far apart compared with their width, stabilizer multiplication can always transform the segment into a disconnected representative.

\begin{theorem}[No fixed-width logical strings]
\label{thm:nostring}
Consider a locally admissible \QtRCC{} coefficient field on $\Z^3$.
Then there is a finite function $\phi(w)$ such that every nontrivial logical string segment of width $w$ has length at most $\phi(w)$.
With the effective-width convention $W=w+4$, the proof gives the conservative bound
\begin{equation}
\phi(w)\le 27W+48=27(w+4)+48 .
\label{eq:nostring-bound}
\end{equation}
The same statement holds on the periodic lattice $\Lambda_L$ for string segments whose support and anchor collar are contained in a contractible no-wraparound region.
It does not forbid noncontractible plane-supported logical operators on the periodic lattice.
\end{theorem}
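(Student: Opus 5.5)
We follow the structure of Haah's original argument~\cite{Haah2011}. The plan is to isolate the few local algebraic facts that proof uses, and to re-establish each of them for arbitrary nonzero, spatially varying coefficients in $\F_3$ that satisfy Eqs.~\eqref{eqn:topo.constraint} and \eqref{eqn:com.constraint}.

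Since the code is CSS, the $X$ and $Z$ parts of a segment $P$ can be treated separately. We describe the $Z$ part; the $X$ part follows from the reflected support of $B_r$ and the explicit formulas \eqref{eq:derive2}. Two local lemmas are needed.

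\emph{Corner (erasing) lemma.} Suppose the support of an operator that commutes with all checks, restricted to a bounded region, has an extremal corner site. Then there is a check whose support meets the operator only near that corner, and multiplying by a suitable power of that check removes the corner. Because every coefficient of $\mathbf a_r$ and $\mathbf b_r$ is nonzero, a one-qutrit corner can always be cancelled by choosing the check power as a ratio of two nonzero field elements. At the two-qutrit corners $G$ (for $A_r$) and $O$ (for $B_r$), commutation with the unique check meeting only that corner forces the two exponents of $P$ to lie in a one-dimensional subspace. The relevant ratio is $\rho(r)$ or its analogue, and these ratios are exactly the ones the admissibility relations make compatible with the check. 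This gives the erasing step.

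\emph{Edge lemma and confusing constraint.} Along a straight axis-aligned edge of the support, the two-term edge constraints in the first six equations of \eqref{eqn:com.constraint} propagate the coefficient of $P$ from site to site with nonzero multipliers. A segment of an edge that is free of violated checks can therefore be cleaned one step at a time. In Haah's "confusing" configuration, a flat rectangular slab of support, checks that overlap the slab on exactly two sites force the exponents at those two sites into a fixed nonzero ratio. These checks peel off one outer row per induction step, and the peeled interval shrinks by a bounded amount per row. The same-cube equation and the topology relations \eqref{eqn:topo.constraint} are used here to show that the ratios are consistent around a plaquette, so that no obstruction survives at the nonuniform coefficients.

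With the lemmas in hand, we run the geometric induction in the order of Fig.~\ref{fig:nostringproof}. First, take the anchors $\Omega_{1,2}$ thickened by a collar of two sites on each side (this gives $W=w+4$). Second, use corner and edge erasing to shrink any representative to an axis-aligned corridor of cross-section at most $W$ joining the anchors; this costs at most a constant multiple of $W$ in bends. Third, flatten the corridor into two rectangles. Fourth, apply the row-peeling induction: after at most $W$ rows, each costing a bounded length, the slab is cut whenever its length exceeds $27W+48$. Tallying the constants gives the bound \eqref{eq:nostring-bound}. The periodic statement follows because every move is a local stabilizer multiplication inside a contractible region, so none of them uses a wraparound. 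Noncontractible planes are not affected, since their support has no extremal corner.

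The main obstacle is the confusing-constraint induction. For $\F_2$ it relies on sums of ones cancelling, whereas here we must check that the products of ratios $\alpha$, $\beta$ and $\rho$ around each peeled row are mutually consistent. We would first try to reduce these consistency identities directly to combinations of the seven equations \eqref{eqn:com.constraint}. If that fails, we would use a local gauge rescaling of the qutrit $Z$ operators that normalizes $w_{1,r}=w_{2,r}=1$ near the slab, and then verify the identities.
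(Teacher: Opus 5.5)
Your global skeleton matches the paper, as does your mixed-corner argument via $\rho$. That skeleton is Haah's support-contained reduction by corner and good-edge erasing, flattening to two thickness-one rectangles, row-by-row peeling, and then counting legs.

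The genuine gap is the confusing-constraint step. You flag it as the main obstacle and leave it as a contingency plan, and the mechanism you propose would not work. Two-site overlaps that fix a nonzero ratio between exponents at adjacent edge sites can never force vanishing on their own. With only relations $\alpha_j b_j+\beta_j a_{j+1}=0$, you can take all $b_j=1$ and $a_{j+1}=-\alpha_j/\beta_j$. That is a nonzero configuration running along the whole edge for any coefficients, which is exactly what a string would be. So checking that products of $\alpha,\beta,\rho$ are consistent around a plaquette aims at the wrong target. This holds whether you use the same-cube equation, the topology relations \eqref{eqn:topo.constraint}, or a gauge rescaling.

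What actually works is an overdetermined triangular system that needs nothing beyond nonvanishing coefficients. In the paper's $X$-type language (your $Z$ part is the reflected version), take the exposed edge $p_j=(W-1,j,0)$ of $R_z$.
\begin{itemize}
\item The check $A_{p_j-C}$ meets the thin support only at $p_j$ (its $C$ corner, $(0,v_C)$) and $p_{j+1}$ (its $D$ corner, $(u_D,0)$). This gives $v_Cb_j+u_Da_{j+1}=0$.
\item The shifted check $A_{p_{j+1}-B}=A_{p_j}$ meets the support only at $p_{j+1}$, through its $B$ corner, because its other corner on the edge is the identity $O$ corner. This gives $v_Bb_{j+1}=0$.
\end{itemize}
Hence $b_{j+1}=0$, and then $a_{j+2}=0$. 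The edge is forced to identity by the commutation equations themselves, not cleaned by check multiplication. The next row inward obeys the same equations up to an $O(1)$ shift. The idea missing from your proposal is that the identity $O$ corner produces a one-site constraint. The topology relations play no role anywhere in the argument.

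Your edge lemma also conflates two kinds of equations. The constraint on $P$ along a good edge, $\alpha b+\beta c=0$, comes from commutation of $P$ with a $Z$-check and holds for arbitrary nonzero coefficients. The two-term equations in \eqref{eqn:com.constraint} constrain the coefficient field and do not ``propagate'' $P$. They are used only to show that the overlapping $X$-check, restricted to the edge, is a nonzero vector in the correlated direction $((0,\beta),(-\alpha,0))$ of that three-dimensional kernel. Together with the two endpoint directions removed by corner erasing, the available checks then span the whole kernel.

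Finally, $27W+48$ is three times the single-leg cut length $9W+16$: once the total length exceeds it, some coordinate leg exceeds $9W+16$. It is not the length at which one slab is cut.
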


The proof is local and deterministic.
It uses the cube support structure, the reflected CSS check pattern, local $A$--$B$ commutation, and the fact that every coefficient is nonzero in $\F_3$.
It does not use the topology identity, the $[111]$ line symmetry, or any statistical property of the sampling procedure.

Because the code is CSS, a Pauli operator may be written, up to phase, as $P=P_XP_Z$.
The $X$-type component is constrained by commutation with the $Z$ checks, while the $Z$-type component is constrained by commutation with the $X$ checks.
It is therefore enough to describe the local argument for an $X$-type component; the $Z$-type argument is obtained by exchanging $X$ and $Z$ and using the reflected check pattern.

The proof has three stages, shown schematically in Fig.~\ref{fig:nostringproof} and sketched below.  
First, corner erasing removes exposed cube corners without creating support outside the prescribed string region.
In other words, the support on an exposed corner must commute with the one-corner overlapping $Z$-check, so this corner is either trivially $II$, or a non-trival operator but generated by the corner of the $X-$check geometrically contained in $P$.
For \QtRCC{}, the only local cube corner that cannot serve as an erasing corner is an identity corner.
Every nonidentity one-qutrit corner imposes an invertible one-component constraint, and every mixed two-qutrit corner imposes a one-dimensional finite-field kernel spanned by the corresponding reflected check vector.
Second, good-edge erasing removes exposed edge support.
For a good edge, the opposite Pauli-type check imposes one nontrivial two-site linear constraint; its kernel is spanned by two endpoint directions removed by corner erasing and one correlated edge direction supplied by the corresponding two-term $A$--$B$ commutation equation.
These two local erasing mechanisms reproduce Haah's support-contained deformation to at most three coordinate-parallel flat corridors and then to the two-rectangle normal form.

The final stage begins after the support has been flattened.
At this point the relevant local condition is Haah's exposed-edge confusing constraint.
Along the outer edge of a thickness-one rectangle, consecutive local commutation equations form a triangular finite-field recurrence: one equation kills a second-qutrit exponent, and the shifted two-site equation then kills the first-qutrit exponent one site farther along the edge.
Because all coefficients are nonzero, the recurrence forces the exposed edge to become identity after a bounded transient.
Repeating the argument row by row creates an empty middle interval in any sufficiently long fixed-width flat segment.
The resulting representative is disconnected, contradicting nontriviality of the assumed logical string segment.
This proves Theorem~\ref{thm:nostring}; the detailed proof is given in Appendix~\ref{app:nostringproof}.

\section{Ground-state degeneracy and membrane logical operators}
\label{sec:numerics}

We now turn to numerical results for finite-size \QtRCC{} models with periodic boundary conditions, to compare with the uniform \QtHC{} reference and to reveal their new properties.

As discussed in Sec.~\ref{sec:qtrccqthc}, the sampled examples used here are admissible coefficient fields with translation invariance along the $[1,1,1]$ direction.
This section studies the ground-state degeneracy and observed logical-operator geometry.
The next section studies charge-push diagnostics for the self-similar recurrence of the uniform reference model.

\subsection{Ground-state degeneracy}
\label{sec:gsd}

\begin{figure}[th!]
\includegraphics[width=\columnwidth]{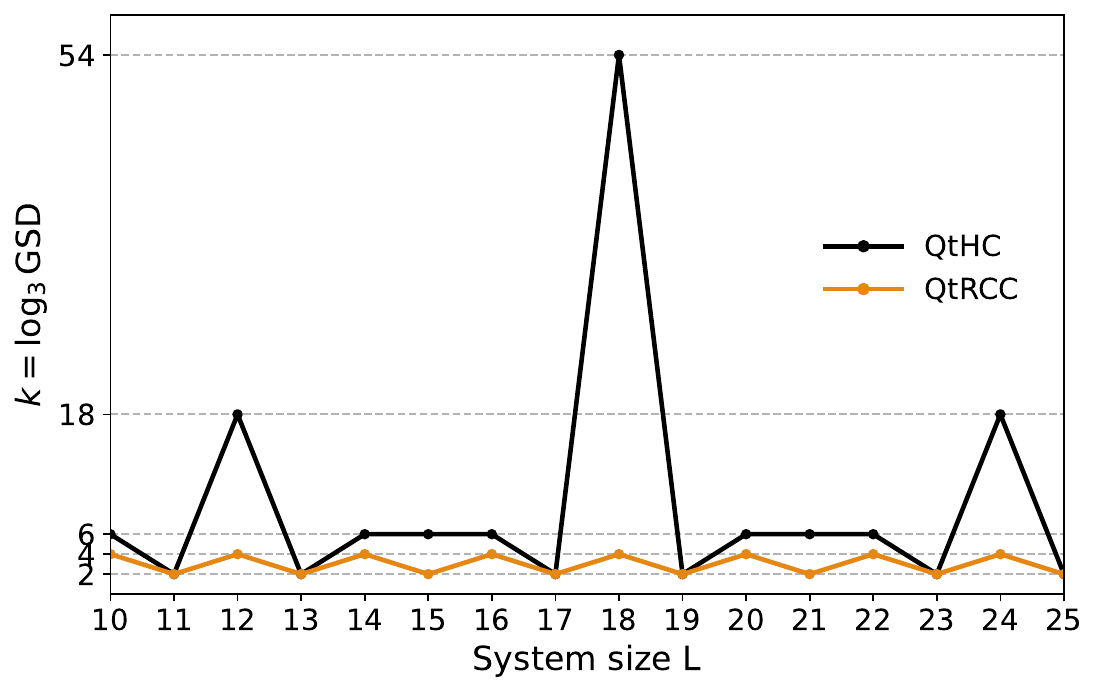}
\caption{Ground-state degeneracy $k = \log_3\GSD $ comparison for $10\le L\le25$.
The translation-invariant \QtHC{} reference model exhibits arithmetic finite-size fluctuations, including the large spike $k=54$ at $L=18$.
The retained minimal-degeneracy \QtRCC{} collection follows the parity pattern of Eq.~\eqref{eq:qtrcck}, with $k=2$ for odd $L$ and $k=4$ for even $L$.}
\label{fig:k.dependency}
\end{figure}

We first compute
\begin{equation}
k=\log_3\GSD
\end{equation}
for generated admissible \QtRCC{} models at fixed system size $L$.
Different random coefficient fields at the same $L$ can have different ground-state degeneracies.
Across the generated models, the smallest observed value follows the simple parity pattern
\begin{equation}
k=\begin{cases}
2,& L\ \text{odd},\\
4,& L\ \text{even}.
\end{cases}
\label{eq:qtrcck}
\end{equation}
for every $10\le L\le25$ studied here.

For the remaining diagnostics we keep, at each $L$, only those generated models whose degeneracy reaches this minimum.
We refer to this subset as the \textit{curated} minimal-degeneracy collection~\cite{QTRCCRepository}.
The number of retained models at each system size is listed in Table~\ref{tab:qtrcc.results}.
Every retained model satisfies
\begin{equation}
\rank(H_Z)=\rank(H_X)=L^3-k/2,
\end{equation}
indicating that the observed degeneracy originates from equal rank deficiencies of the two CSS check matrices.

The contrast with the \QtHC{} reference, shown in Table~\ref{tab:qtrcc.results} and Fig.~\ref{fig:k.dependency}, is pronounced.
The translation-invariant \QtHC{} exhibits strong arithmetic size dependence, including large degeneracy spikes such as $k=54$ at $L=18$.
In contrast, the retained minimal-degeneracy \QtRCC{} examples have the parity pattern of Eq.~\eqref{eq:qtrcck}.

The data show no arithmetic degeneracy spikes of the kind seen in the uniform \QtHC{} reference over the studied sizes.
This indicates that the logical operator structure is highly sensitive to spatial variations of the check coefficients.
Within the curated collection, constrained randomness suppresses the finite-size self-similar fractal mechanism responsible for the enhanced degeneracies of \QtHC{} and leaves only the observed parity dependence of Eq.~\eqref{eq:qtrcck}.

In this sense, the \QtRCC{} GSD are topological instead of geometrical/fractal. This is distinct from type I/II fracton orders, where GSD has non-trivial dependence on the system size. 
The randomness of stabilizers, although locally are of fracton order type, turns the fracton order into a topologically order.

\begin{table*}[t]
\caption{Periodic-lattice ground-state degeneracies of the uniform reference model (\QtHC{}) and the curated minimal-degeneracy \QtRCC{} collection~\cite{QTRCCRepository}.
Among the generated line-symmetric \QtRCC{} examples, the retained samples are those with the smallest observed $k$ at each fixed $L$, namely $k=2$ for odd $L$ and $k=4$ for even $L$.
The final two columns report the minimum found plane-supported quotient-logical representative weights in the $X$- and $Z$-type calculations after minimizing over all curated models at fixed $L$.
These weights come from the plane-logical quotient calculation and should not be read as global minimal code distances.}
\label{tab:qtrcc.results}
\begin{ruledtabular}
\begin{tabular}{c c c c c c}
$L$ & $k_{\QtHC}$ & \QtRCC{} models & $k_{\QtRCC}$ &
$w_X^{\rm min}$ & $w_Z^{\rm min}$ \\
\hline
10 & 6 & 50 & 4 & 50 & 50 \\
11 & 2 & 50 & 2 & 121 & 121 \\
12 & 18 & 50 & 4 & 72 & 72 \\
13 & 2 & 50 & 2 & 169 & 169 \\
14 & 6 & 50 & 4 & 98 & 98 \\
15 & 6 & 20 & 2 & 225 & 225 \\
16 & 6 & 20 & 4 & 128 & 128 \\
17 & 2 & 20 & 2 & 289 & 289 \\
18 & 54 & 20 & 4 & 162 & 162 \\
19 & 2 & 20 & 2 & 361 & 361 \\
20 & 6 & 5 & 4 & 200 & 200 \\
21 & 6 & 3 & 2 & 441 & 441 \\
22 & 6 & 5 & 4 & 242 & 242 \\
23 & 2 & 5 & 2 & 529 & 529 \\
24 & 18 & 1 & 4 & 288 & 288 \\
25 & 2 & 5 & 2 & 625 & 625 \\
\end{tabular}
\end{ruledtabular}
\end{table*}

\subsection{The logical operators}
\label{sec:planes}

\begin{figure}[th!]
\includegraphics[width=\columnwidth]{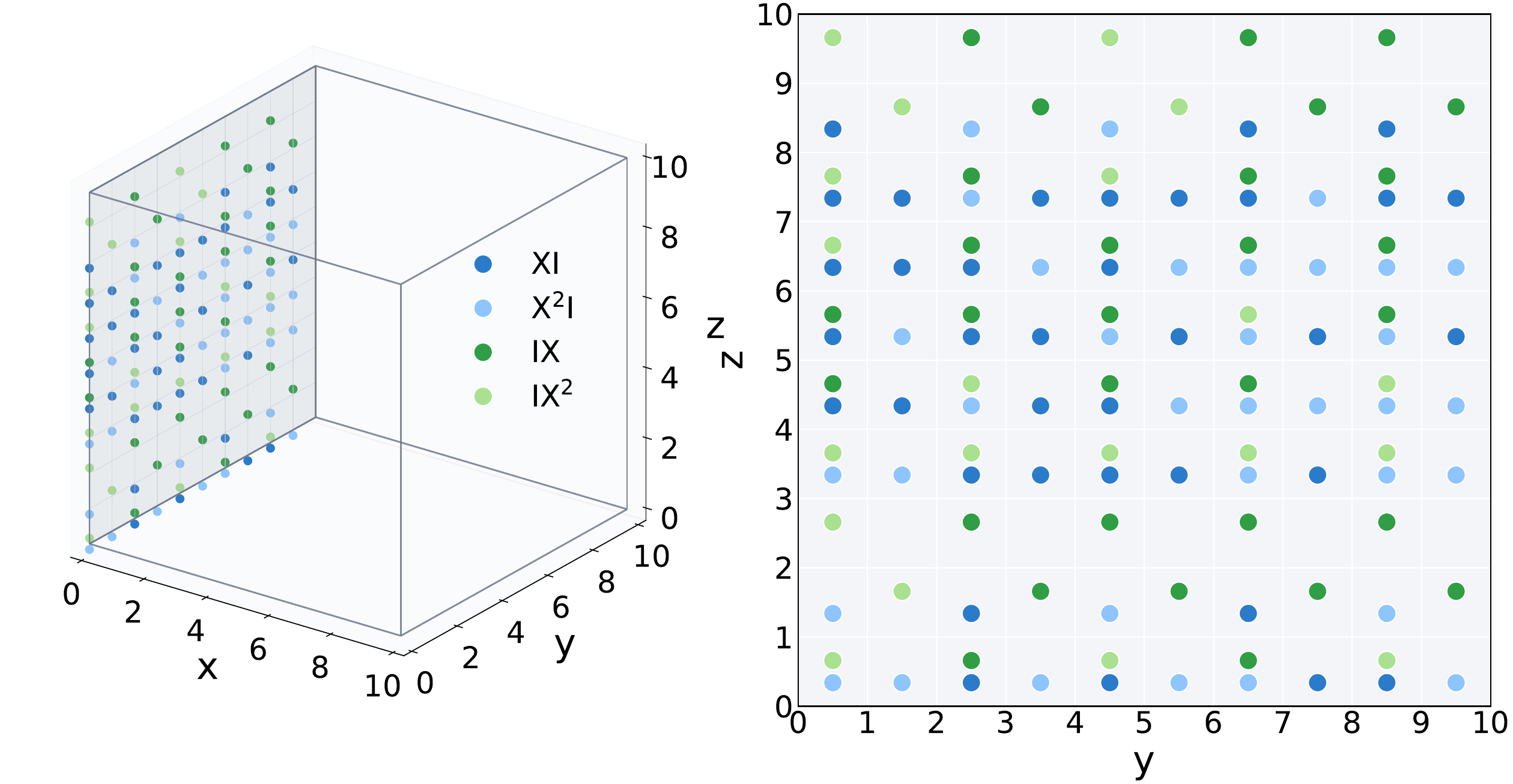}
\caption{Example plane-supported $X$ logical representative for the curated \QtRCC{} example \textsf{L10\_model\_01}~\cite{QTRCCRepository}.
The left panel gives a three-dimensional view of the operator support on the plane $x=0$, and the right panel shows the same support projected to the transverse $y$-$z$ plane.
Blue markers denote first-qutrit factors $XI$, green markers denote second-qutrit factors $IX$, and the two shades distinguish exponents $1$ and $2$ in $\F_3$.
This selected quotient-basis representative has weight $120$: it occupies $90$ of the $10^2$ lattice sites in the plane, with $30$ sites carrying both qutrit factors.}
\label{fig:planeL10}
\end{figure}

\begin{figure}[th!]
\includegraphics[width=\columnwidth]{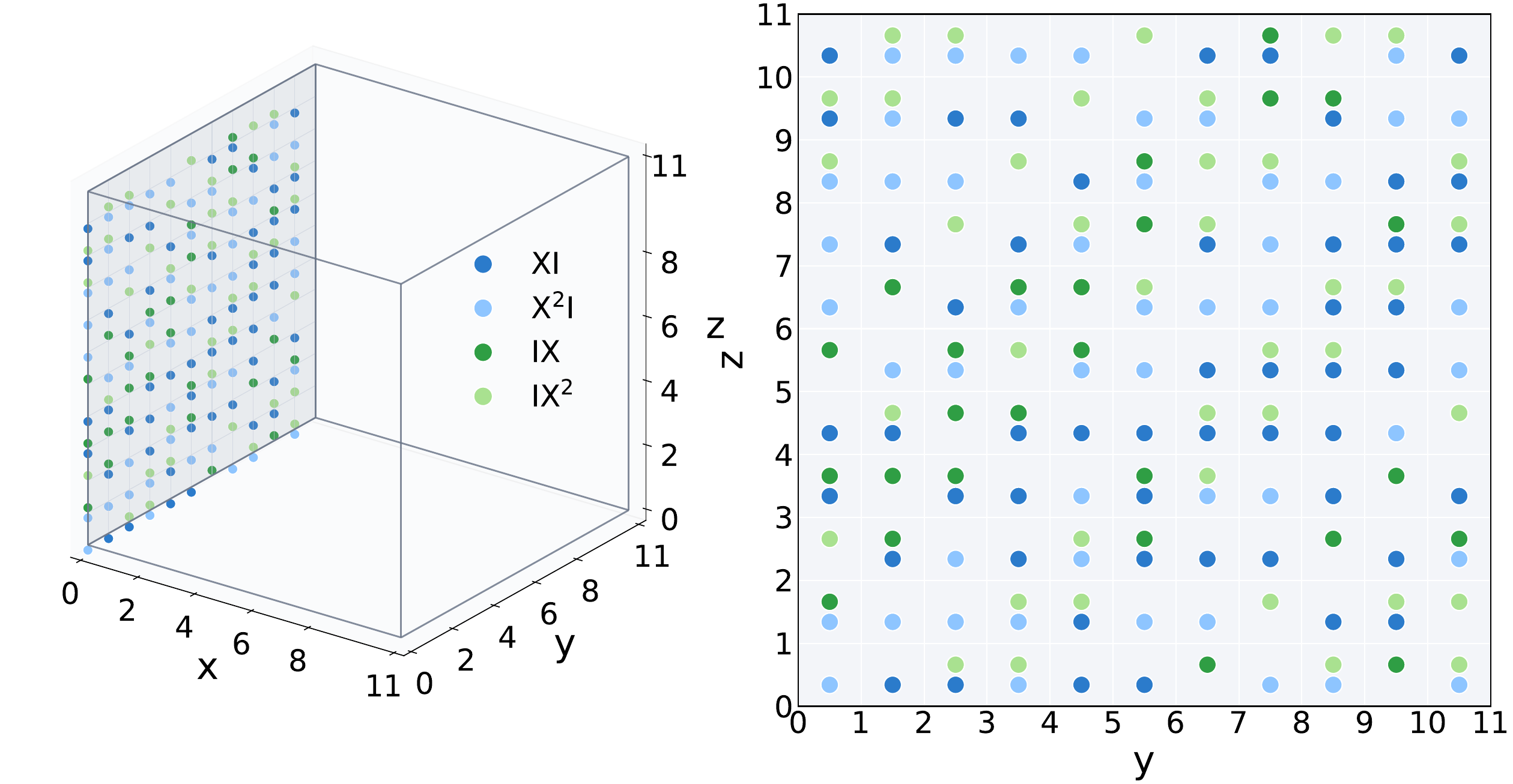}
\caption{Example plane-supported $X$ logical representative for the curated \QtRCC{} example \textsf{L11\_model\_01}~\cite{QTRCCRepository}.
The left panel gives a three-dimensional view of the operator support on the plane $x=0$, and the right panel shows the same support in transverse $y$-$z$ coordinates, using the color convention of Fig.~\ref{fig:planeL10}.
This model has $k=2$, so the $X$-type plane quotient has one selected basis representative on the displayed plane.
The representative has weight $165$ and occupies every site of the $11^2$ plane at least once; $77$ sites carry one nonzero qutrit factor and $44$ sites carry both.}
\label{fig:planeL11}
\end{figure}

\begin{figure}[th!]
\includegraphics[width=\columnwidth]{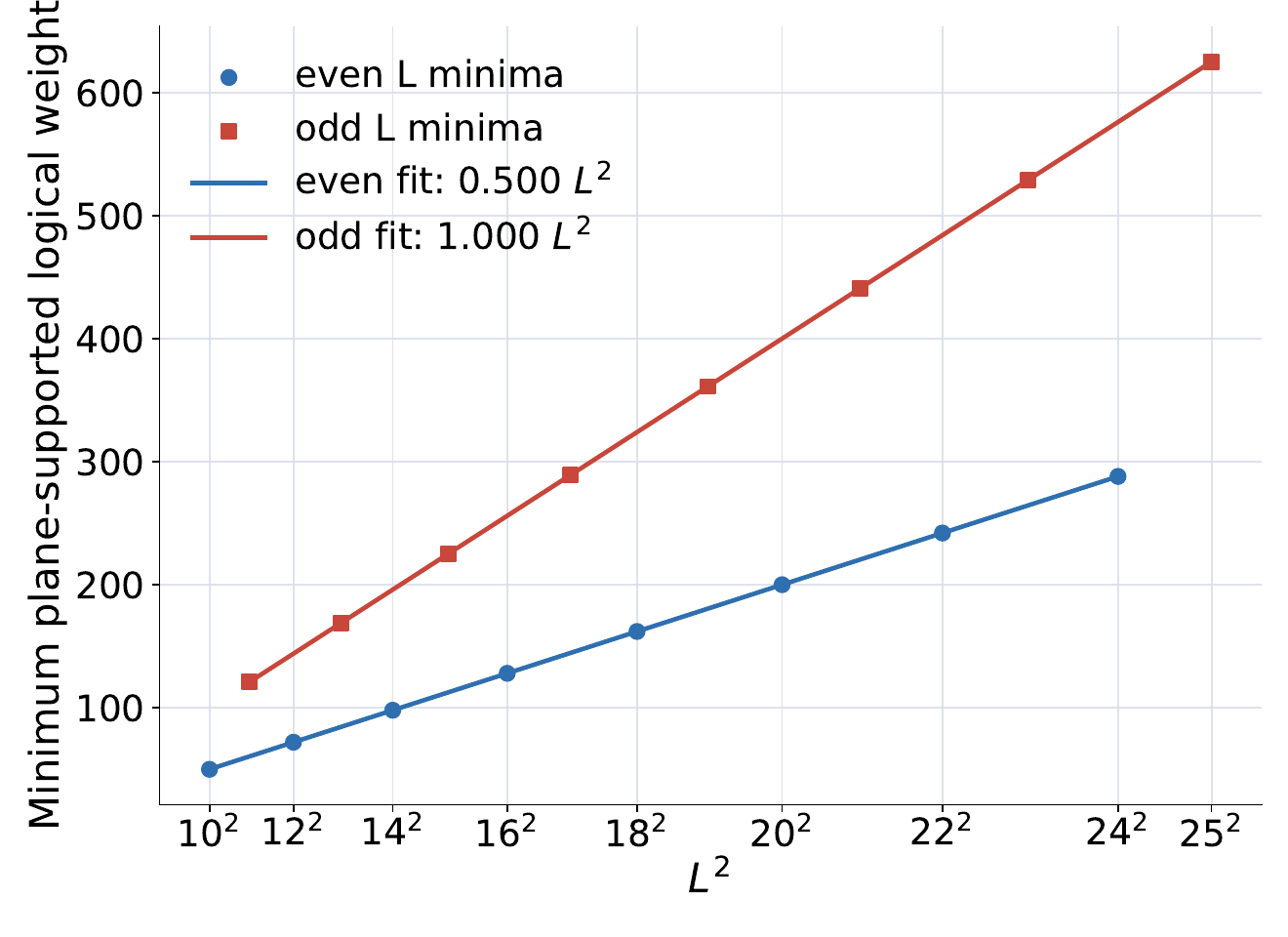}
\caption{Minimum plane-supported logical operator weight versus $L^2$ for the curated \QtRCC{} collection.
For each $L=10,\ldots,25$, the point is the minimal weight over all curated models at that size.
The $X$- and $Z$-type minima coincide at every size.
The best parity-resolved fit is exact on the available data: $w=L^2/2$ for even $L$ and $w=L^2$ for odd $L$.}
\label{fig:planeweight}
\end{figure}

The second numerical observation concerns the geometry associated with the logical operators.
For every curated \QtRCC{} example, we find that noncontractible plane-supported logical operators span the entire logical operator space.

Specifically, we use an exhaustive diagnostic within a precisely specified support family: single-layer coordinate planes normal to the $x$, $y$, and $z$ directions, with every plane position scanned for every curated \QtRCC{} example.

For a plane support $S$, an $X$-type Pauli operator supported on $S$ is charge-neutral precisely when its exponent vector lies in the restricted kernel $\ker H_Z[:,S]$.
Let $P_X(S)$ be a matrix whose rows span these plane-supported charge-neutral candidates.
The logical content carried by this plane, modulo the subgroup generated by $X$-checks, is measured by the quotient rank
\begin{equation}
k_X(S)=
\rank\begin{pmatrix}
H_X\\
P_X(S)
\end{pmatrix}
-\rank(H_X).
\label{eq:planeX}
\end{equation}
The corresponding $Z$-type diagnostic exchanges $X$ and $Z$:
\begin{equation}
k_Z(S)=
\rank\begin{pmatrix}
H_Z\\
P_Z(S)
\end{pmatrix}
-\rank(H_Z),
\label{eq:planeZ}
\end{equation}
where rows of $P_Z(S)$ span the $Z$-type plane-supported candidates in $\ker H_X[:,S]$.

After collecting candidates from all scanned coordinate planes, we find for all 374 curated \QtRCC{} samples that
\begin{equation}
k_X=k_Z=k .
\label{eq:plane-span-full}
\end{equation}
Here $k_X$ and $k_Z$ denote the dimension of logical operators in the $X$ and $Z$ sector respectively. 

Figures~\ref{fig:planeL10} and \ref{fig:planeL11} show two concrete representatives from this plane-supported logical operator calculation.
Both examples are $X$-type representatives on the coordinate plane $x=0$.
In each figure, the left panel shows the support as a three-dimensional plane embedded in the torus, while the right panel shows the same operator in  2D plots  of $y$-$z$ coordinates.
These examples show what the plane-supported logical operators look like microscopically: the support is confined to one noncontractible plane, but the two qutrits at a site and the two nonzero $\F_3$ exponents are populated in a spatially structured pattern.

We next aggregate the plane-logical operator weight data over the same curated collection of models.
For each size $L$ and each CSS Pauli type, we take the overall minimum of the per-model minimum plane-supported logical operator weight over all curated models at that size.
The resulting $X$- and $Z$-type minima are identical for every $10\le L\le25$, as listed in Table~\ref{tab:qtrcc.results} and shown in Fig.~\ref{fig:planeweight}.
Moreover, the sequence is exactly described by a parity-resolved area law over the studied range:
\begin{equation}
w_X^{\rm min}(L)=w_Z^{\rm min}(L)=
\begin{cases}
L^2/2,& L\ \text{even},\\
L^2,& L\ \text{odd}.
\end{cases}
\label{eq:planeweight-parity}
\end{equation} 

We note that this statement is not an exact code-distance result.
Finding a globally minimum-weight logical operator is a global code-distance calculation that we do not perform here.
The searches reported here do not exclude irregular non-plane logical representatives of smaller support, although we have not found any.

Finally a remark is due. 
In the original qubit Haah Code 1, Haah observed that a particular simple type of plane operator, obtained by repeating the same single-site operator over a plane, does not span the entire logical space for some system sizes~\cite{Haah2011}.
Our plane diagnostic is different: for each single coordinate plane, we include all Pauli operators supported on that plane that commute with the stabilizer checks, and then quotient by stabilizers.
With this exhaustive plane-supported search, we find that single-plane operators span the full logical space over the sizes studied, both for the uniform \QtHC{} reference and for the curated \QtRCC{} models. 

\subsection{Absence of tube-supported logical operators}
\label{sec:tubes}

In this subsection, we report a numerical search for logical operators supported in axis-aligned tube geometries. No logical operators are found in this search.

The no-string theorem of Sec.~\ref{sec:nostringmain} excludes fixed-width string segments on the infinite lattice, but not logical operators on periodic finite-size lattices whose width grows with system size.
Such objects are wide string-like supports: their transverse width can be as large as $L-1$, while the longitudinal extent is also $O(L)$.
One important case is a tube-supported logical operator, with support dimensions $w\times w\times L$.
In the curated models, our numerical search finds no axis-aligned tube-supported logical operator in the $x$, $y$, or $z$ direction.
Equivalently, within the scanned family, no logical operator is supported on an axis-aligned tube with width $w<L$. 

To search for such operators, we scan axis-aligned tubes of support $w\times w\times L$ in all three lattice directions.
Let $S$ denote the set of physical qutrits in the tube and $C$ its complement.
The dimension of the supported $X$-type quotient associated with $S$ is
\begin{equation}
d_X(S)
=
|S|
-\rank H_Z[:,S]
-\rank(H_X)
+\rank H_X[:,C].
\label{eq:tubeX}
\end{equation}
Here $|S|-\rank H_Z[:,S]$ is the dimension of $X$-type operators supported on $S$ that commute with all $Z$-checks.
Among $X$-type stabilizers, the subgroup that has a representative supported inside $S$ has dimension $\rank(H_X)-\rank H_X[:,C]$, obtained by projecting the $X$-check row span onto the complement $C$.
Equation~\eqref{eq:tubeX} subtracts this supported stabilizer subgroup from the supported charge-neutral subspace.
The corresponding $Z$-type quantity $d_Z(S)$ is obtained by exchanging $X$ and $Z$.
A tube-supported logical operator exists if and only if
\begin{equation}
d_X(S)>0
\qquad\text{or}\qquad
d_Z(S)>0.
\end{equation}

Within this axis-aligned tube family, the strongest test uses the widest tubes, with width
\begin{equation}
w=L-1.
\end{equation}
Every narrower axis-aligned tube lies inside a width-$(L-1)$ tube with the same orientation.
Therefore, if no logical operator is supported on any width-$(L-1)$ tube, then no logical operator is supported on any narrower axis-aligned tube.
It is therefore sufficient, for this axis-aligned tube diagnostic, to scan only the widest proper tubes.

The completed search covers all tube directions and positions in all curated models.
For every scanned support region and every curated model we find
\begin{equation}
d_X(S)=d_Z(S)=0.
\label{eq:tubezero}
\end{equation}
Thus no proper axis-aligned tube carries logical content in either CSS Pauli type.

This result  strengthens the membrane picture established in Sec.~\ref{sec:planes}.
Within the scanned axis-aligned tube family, the logical operators are not merely represented on planes; no support family strictly narrower than a membrane carries logical content.
In particular, the widest possible proper tube already carries no logical operator.

We finally note again that the tube scan still does not exclude every conceivable irregular support geometry.
For example, an irregular fractal support that is neither membrane-like nor tube-like is not ruled out by Eq.~\eqref{eq:tubezero}.

\section{Absence of self-similar fractal operators}
\label{sec:no.fractal}

Section~\ref{sec:numerics} reports membrane-supported logical representatives and no proper axis-aligned tube logicals in the curated \QtRCC{} collection~\cite{QTRCCRepository}.
A related important question is whether the self-similar fractal charge-growth mechanism of fracton codes such as Haah's Code 1 survives after the coefficients are made spatially nonuniform.
This diagnostic probes the same mechanism that underlies self-similar fractal logical operators in the uniform model.
Besides the ground-state degeneracy, it provides a direct test of how constrained spatial randomness affects the fractal structure and charge dynamics.

We test this by a concrete local charge-push protocol.
Starting from a single $XI$ operator applied to a ground state, we create four violated $Z$ checks.
We then keep one charge fixed at its original location and push the other three charges layer by layer in the $[111]$ direction using only $XI$ and $X^2I$ operators.
This procedure is deterministic once the initial step is fixed.
We record the number $N(\ell)$ of nonzero charges at layer $\ell$ and examine whether the pattern of $XI/X^2I$ operators forms a finite self-similar fractal, with only four charges at its corners at layer $\ell=3^n$.
We also repeat the same diagnostic for the $IX$ channel, using only $IX$ and $IX^2$ moves.

For reference, in the spatial uniform \QtHC{}, the charge pattern has a power-of-three fractal recurrence: at layers $\ell=3^n$, the process returns to four charges at the corners of the fractal support.
In contrast, the curated \QtRCC{} examples exhibit very distinct behavior.

\subsection{Layer-9 charge-count statistics}
\label{sec:chargepush-layer9}

\begin{figure}[th!]
\includegraphics[width=\columnwidth]{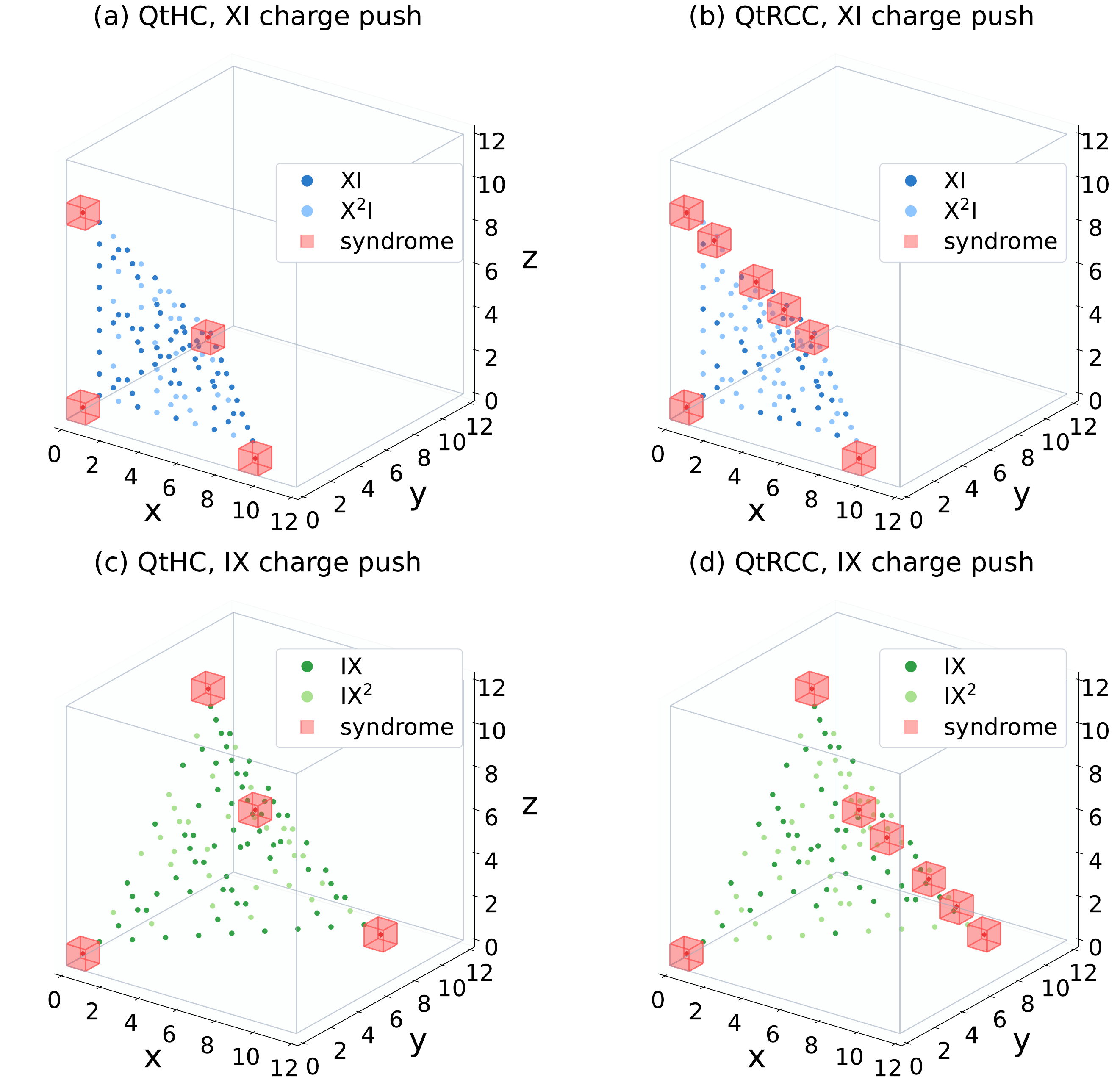}
\caption{Layer-9 charge-push result for the uniform \QtHC{} benchmark and a representative curated \QtRCC{} example \textsf{L12\_model\_01}~\cite{QTRCCRepository}.
(a,b) the charge push at layer 9 by the $XI/X^2I$ operators for  \QtHC{} and  \QtRCC{} example respectively. The $XI/X^2I$ operators are marked by blue dots, and red translucent cubes mark the violated $Z$-checks remaining at layer 9.
(c, d) the charge push at layer 9 by the $IX/X^2I$ operators, marked by green dots.
In both cases, the \QtHC{} returns to four charges in both channels at this power-of-three layer.
The \QtRCC{} model remains locally pushable but does not return to the same four-charge pattern, producing additional violated checks whose positions depend on the spatial coefficient field.}
\label{fig:chargepush-l9-types}
\end{figure}

\begin{figure}[th!]
\includegraphics[width=\columnwidth]{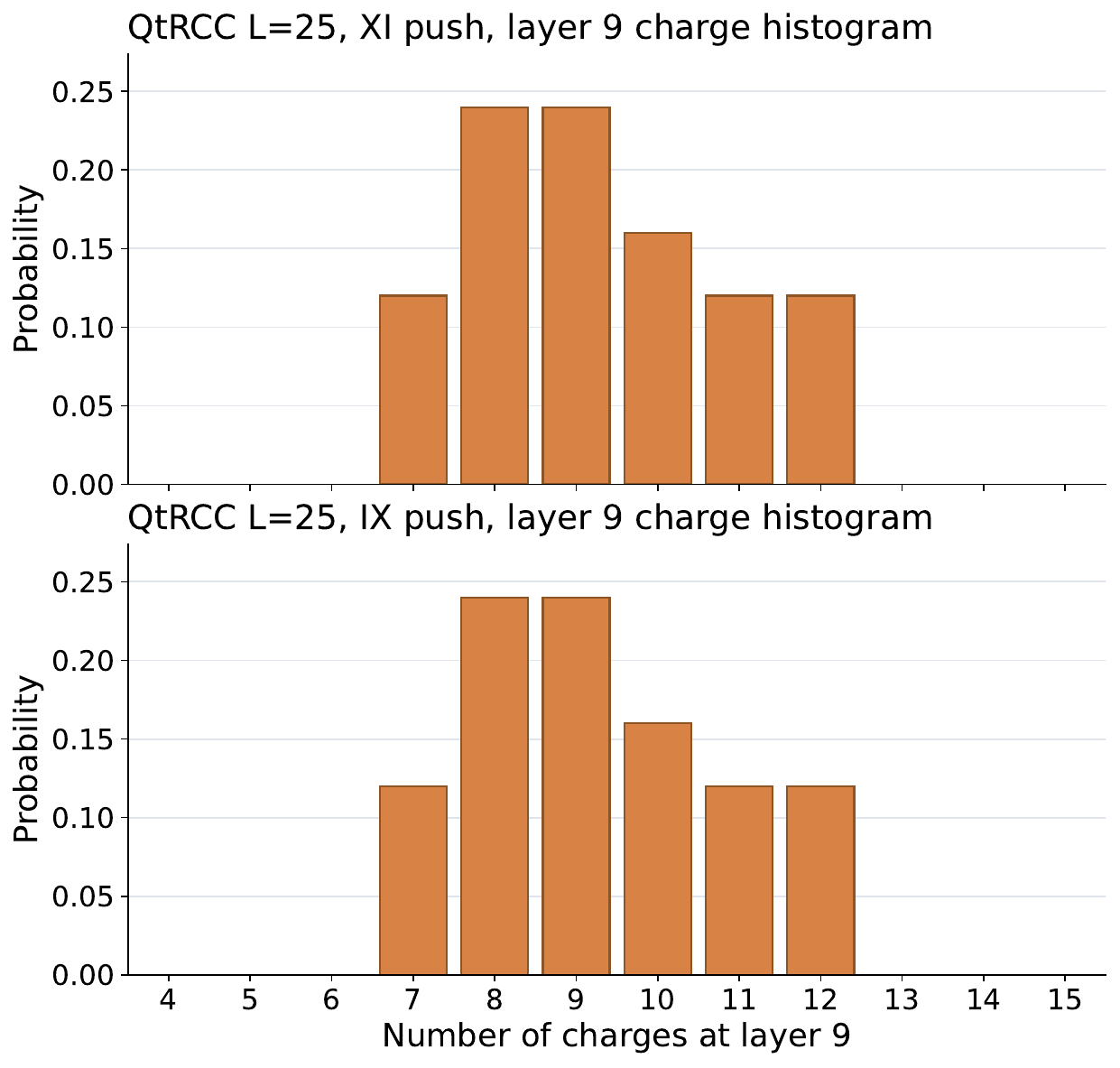}
\caption{Layer-9 charge-count statistics for the curated random \QtRCC{} example \textsf{L25\_model\_01}~\cite{QTRCCRepository} at $L=25$.
The upper panel shows the empirical probability distribution for the first-qutrit $XI$ push, and the lower panel shows the second-qutrit $IX$ push.
Each panel uses all $25^2$ distinct starting corners in the same model.
In the \QtHC{} benchmark, layer 9 is a power-of-three recurrence layer with four total charges.
In this \QtRCC{} instance, both channels are instead supported on 7 through 12 charges, with common mode 8 and mean approximately 9.2.}
\label{fig:layer9hist}
\end{figure}

Layer $9=3^2$ is the first nontrivial power-of-three recurrence scale beyond layer 3 in the \QtHC{} benchmark.
In \QtHC{}, the push returns to four charges at the fractal support corners at this layer.
We therefore first ask whether a representative curated \QtRCC{} example shows any comparable four-charge recurrence at layer 9.

The protocol is deterministic once the starting corner, coefficient field, and qutrit channel are fixed.
A single first-qutrit operator $X_{s,1}^t$ at $s=(x,y,z)$ violates four neighboring $Z$-checks, at
\begin{equation}
\begin{gathered}
s+(-1,-1,-1),\quad s+(0,-1,-1),\\
s+(-1,0,-1),\quad s+(-1,-1,0),
\end{gathered}
\end{equation}
with charges
\begin{equation}
(w_1t,\ u_Dt,\ u_Et,\ u_Ft)
\end{equation}
respectively, where the coefficients are evaluated at the corresponding anchors.
Since all coefficients are nonzero in $\F_3$, any non-reference charge $q$ at cube $r$ can be canceled uniquely by applying a first-qutrit operator at $r+(1,1,1)$ with exponent
\begin{equation}
t=-q/w_{1,r}.
\label{eq:pushrule}
\end{equation}
For the second-qutrit channel, $X_{s,2}^t$ instead creates charges
\begin{equation}
(w_2t,\ v_At,\ v_Bt,\ v_Ct),
\end{equation}
and the cancellation exponent is $t=-q/w_{2,r}$.
We keep one reference charge fixed and iterate the cancellation rule for every other charge, thereby pushing the syndrome front one layer forward along the $[111]$ direction.

Figure~\ref{fig:chargepush-l9-types} gives an example of the corresponding layer-9 charge configurations for both qutrit channels in \QtHC{} and in a \QtRCC{} model.
In the \QtHC{} benchmark, both $XI$ and $IX$ pushes leave the familiar four endpoint cube charges at the power-of-three layer.
In the \QtRCC{} example, the same deterministic cancellation rule produces additional violated $Z$ checks at layer 9, with different spatial patterns in the two qutrit channels.

Figure~\ref{fig:layer9hist} then quantifies the \QtRCC{} layer-9 charge-count distribution for the curated \QtRCC{} example \textsf{L25\_model\_01}~\cite{QTRCCRepository} at $L=25$.
The upper panel uses all $25^2$ distinct (due to the $[111]$ line symmetry) sampled starting corners in the first-qutrit $XI$ channel, and the lower panel uses all distinct starting corners in the second-qutrit $IX$ channel.

In both channels, after the layer-9 push, the charge counts lie in a compact 7-to-12-charge window, with no sampled start ending with charge count $4$, $5$, or $6$.
The two histograms have identical distributions 
\begin{equation}
	\Pr_{XI/IX}(N(9)=n) =
	\begin{cases}
		75/625 = 0.12, & n=7,\\
		150/625 = 0.24, & n=8,\\
		150/625 = 0.24, & n=9,\\
		100/625 = 0.16, & n=10,\\
		75/625 = 0.12, & n=11,\\
		75/625 = 0.12, & n=12,\\
		0, & \text{otherwise}.
	\end{cases}
\end{equation}
with mean
\begin{equation}
	\langle N_9\rangle_{XI}
	=
	\langle N_9\rangle_{IX}
	=
	9.28.
\end{equation}
Thus the layer-9 charge excess over the uniform four-charge recurrence is persistent across the sampled starting corners.

\subsection{Charge push and loss of power-of-three recurrence}
\label{sec:chargepush}

\begin{figure}[th!]
\includegraphics[width=\columnwidth]{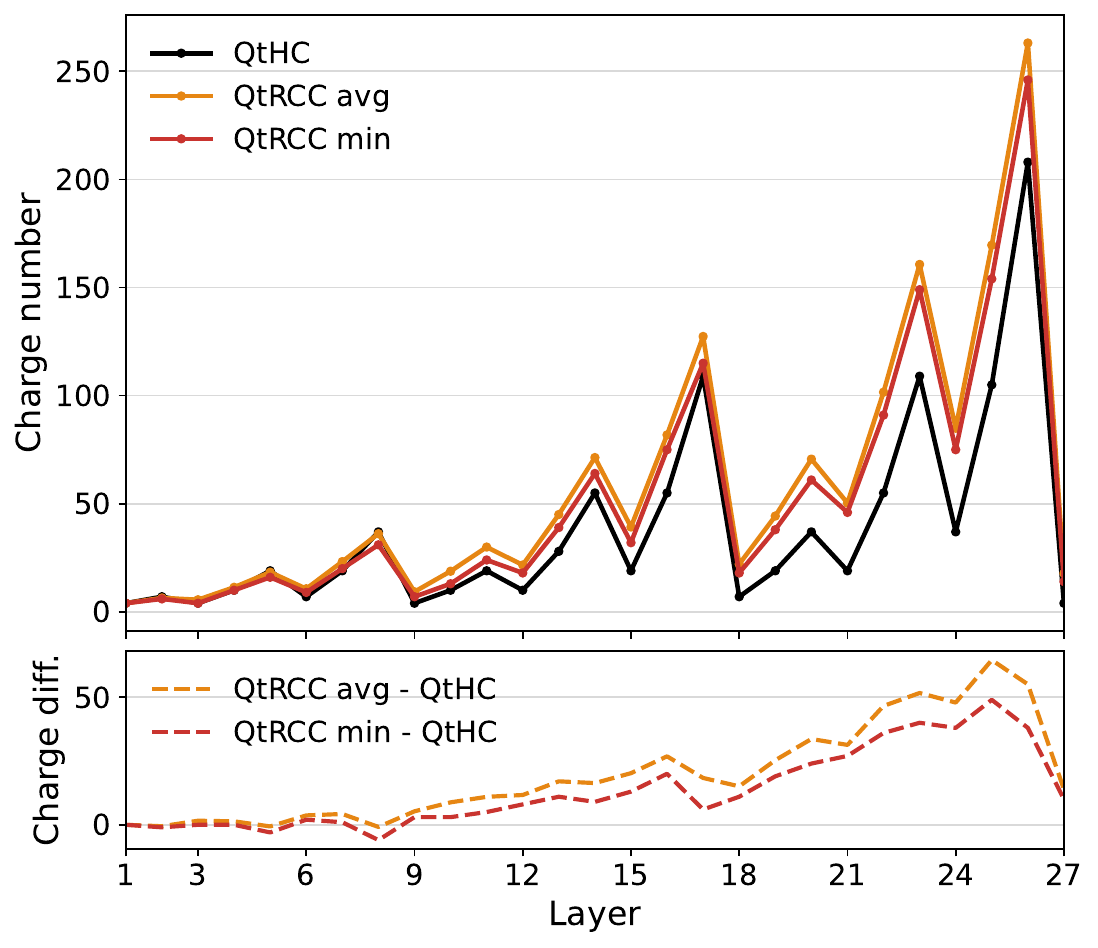}
\caption{Charge push in the uniform and constrained-random qutrit cubic codes at $L=25$.
Starting from a single first-qutrit $X$ operator, every non-reference $Z$-check charge is canceled by the unique next-layer $IX/IX^2$ move in qutrit arithmetic, pushing the moving syndrome front along $[111]$.
The upper panel shows the total number of nonzero charges for the uniform \QtHC{} benchmark and for one curated \QtRCC{} example, \textsf{L25\_model\_01}~\cite{QTRCCRepository}.
For \QtRCC{}, the average and layerwise minimum are taken over 100 sampled starting corners.
The lower panel subtracts the uniform result.
\QtHC{} returns to four charges at layers $1,3,9$, and $27$, while this \QtRCC{} instance does not: at layer $27$ the sample average is $17.78$ and the layerwise best sampled count is $14$.}
\label{fig:chargepush2}
\end{figure}

\begin{figure}[h!]
	\includegraphics[width=\columnwidth]{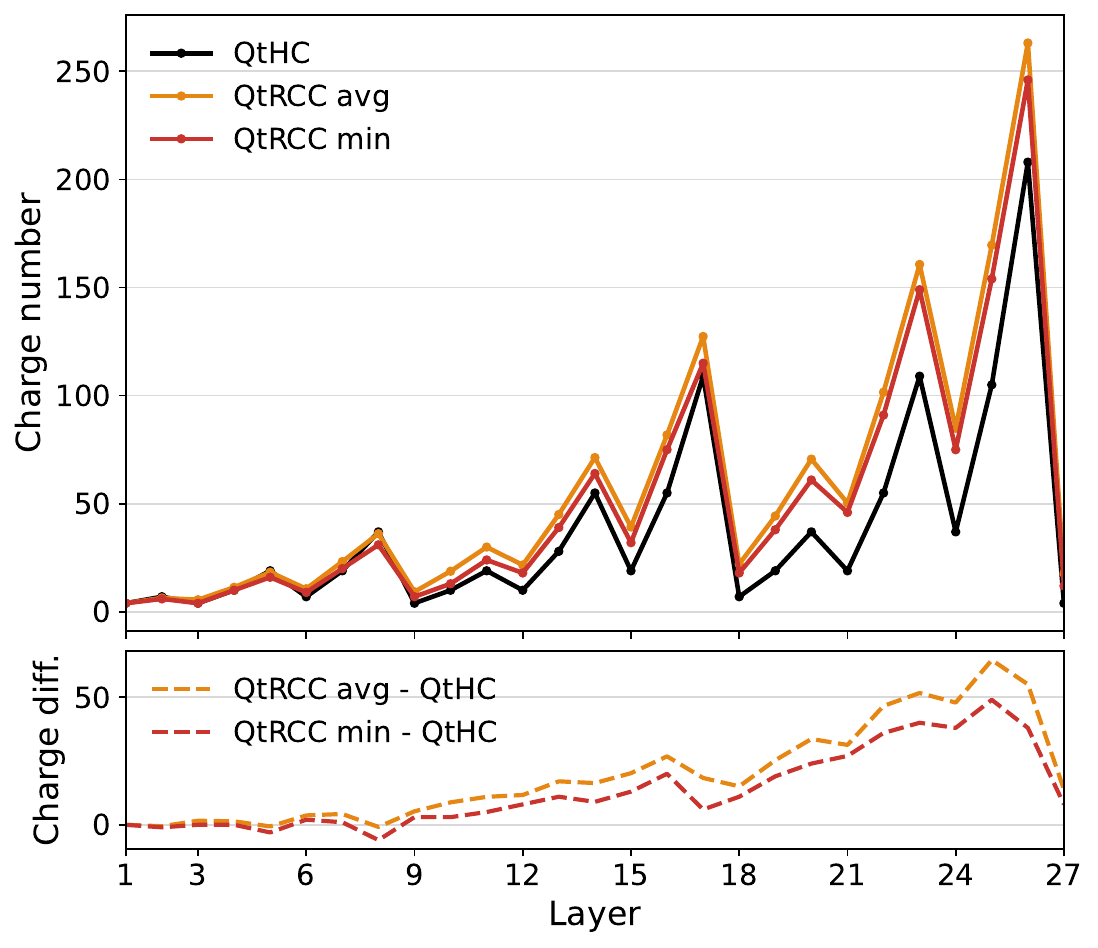}
	\caption{Charge push in the uniform and constrained-random qutrit cubic codes at $L=25$ using the $IX/IX^2$ operators.}
	\label{fig:chargepush}
\end{figure}

We next follow the same diagnostic through a larger range of push layers.
Let $N(\ell)$ be the total number of nonzero $Z$-check charges after the $\ell$th layer has been generated in the first-qutrit push.
The layer-$27=3^3$ data in Fig.~\ref{fig:chargepush} are computed on the periodic $L=25$ lattice and intentionally include the wraparound of the pushed syndrome front to see periodical boundary effect.
We use this as a finite-torus self-similar fractal diagnostic: the uniform \QtHC{} reference still returns to four charges under the same periodic arithmetic, whereas the constrained-random \QtRCC{} instance does not.

For the spatially uniform \QtHC{} benchmark, the exact power-of-three fractal recurrence occurs: in the $L=25$ calculation shown in Fig.~\ref{fig:chargepush},
\begin{equation}
N_{\QtHC}(1)=N_{\QtHC}(3)=N_{\QtHC}(9)=N_{\QtHC}(27)=4 .
\end{equation}
The total charge number grows between these special layers, but it repeatedly returns to the four charges when self-similar fractal support forms.

The \QtRCC{} model remains locally pushable deterministically, but the spatially varying coefficients change the number of charges at each layer.
Figure~\ref{fig:chargepush} compares \QtHC{} with one representative curated \QtRCC{} example, \textsf{L25\_model\_01}~\cite{QTRCCRepository}, using a all distinct $25^2$ starting points with $XI/X^2I$ charge push.
The orange curve is the average over starting corners, and the red curve is the layerwise minimum over all starting corners trajectories.
The two points on the minimum curve  may correspond to different starting points and cannot be realized in a single charge-pushing trajectory.

At the first layer the two models agree, because the local syndrome has the same one-plus-three structure.
The distinction appears at the recurrence scales.
At layer $9$, where \QtHC{} returns to four charges, the \QtRCC{} example has
\begin{equation}
	N_{\QtRCC}^{\rm avg}(9)=9.28,\qquad
	N_{\QtRCC}^{\rm min}(9)=7 .
\end{equation}
At layer $27$ the difference is larger:
\begin{equation}
	N_{\QtRCC}^{\rm avg}(27)=17.84,\qquad
	N_{\QtRCC}^{\rm min}(27)=14,
\end{equation}
whereas $N_{\QtHC}(27)=4$.
Thus even the best sampled starting corner at the power-of-three scale gains additional charges.
The lower panel of Fig.~\ref{fig:chargepush} subtracts the uniform result and shows that, after the first few layers, the \QtRCC{} charge count remains persistently above the uniform benchmark throughout the plotted window.
The largest average excess in layers $1\le\ell\le27$ is $64.60$ charges at layer $25$, and the largest layerwise-minimum excess is $49$ charges, also at layer $25$.

Figure~\ref{fig:chargepush} shows the result with $IXI/IX^2$ charge push. Interestingly, the corresponding $IX$-push data give the same layerwise average charge counts as the $XI$ channel over all $25^2$ distinct starts. The layerwise minima also agree through layer 26; the only difference in layers $1\le \ell\le 27$ occurs at layer 27, where the $IX$ minimum is 12 instead of the XI value $14$.

In this charge-push calculation, the tested self-similar fractal charge-creation process is absent in the representative \QtRCC{} model.
The local cancellation moves still exist, but the global power-of-three recurrence of \QtHC{} is not observed.
This means the self-similar fractal operators on a single sublattice cannot form a logical operator even when lattice size is compatible. 
This mirrors the ground-state-degeneracy study, which shows no arithmetic dependence of the \QtHC{} type.
Together, these diagnostics show the absence of the tested self-similar fractal recurrence in \QtRCC{}: neither the finite-size GSD data nor the charge-push process retains the regular power-of-three structure of the uniform reference model.
However, we note that they do not exclude every possible irregular non-self-similar fractal representative that mixes the two sublattice operators.

\section{Discussion and outlook}
\label{sec:discussion}

\subsection{Established properties}

We have introduced the qutrit random cubic code, a family of local qutrit CSS stabilizer Hamiltonians that keep the cube support structure of Haah's Code 1 while allowing the finite-field stabilizer exponents to vary in space.
This variation or randomness is highly constrained: on the torus the $Z$ checks obey a global stabilizer relation, and locally the $A$ and $B$ checks must satisfy the CSS commutation equations derived in Sec.~\ref{sec:model}. 
We have generated a curated set of models of $10 \le L \le 25$ with minimal GSD at each $L$. These models are publicly available at repository~\cite{QTRCCRepository}.

The main rigorous result is the no-string theorem for \QtRCC.
For every locally admissible \QtRCC{} coefficient field, there are no fixed-width logical string segments in Haah's sense.
The proof is generalized from the Haah's Code 1 proof and verifies that the required corner-erasing moves, good-edge-erasing moves, and exposed-edge confusing constraint remain valid over $\F_3$ because all relevant coefficients are nonzero and hence invertible.

The numerical calculations explore the properties of \QtRCC{} in greater depth.
For $10\le L\le25$, the smallest observed ground-state degeneracy exponent follows the simple parity pattern $k=2$ for odd $L$ and $k=4$ for even $L$.
Noncontractible plane-supported operators span the entire logical operator space.
The tube scan finds no proper axis-aligned tube logical operator, even for the widest possible proper tubes of width $L-1$.
Finally, the charge-push diagnostics show that the power-of-three self-similar fractal recurrence of the uniform \QtHC{} reference is absent in representative \QtRCC{} models.
Taken together, these results show that the constrained randomness in \QtRCC{} preserves the Haah-type no-string mechanism while suppressing the regular self-similar fractal mechanism of the translation-invariant reference model.

\subsection{Speculation}

The observations above leave one central question unresolved: \QtRCC{} may or may not admit irregular fractal logical operators that are neither fixed-width strings, nor axis-aligned tubes, nor the power-of-three self-similar fractals of \QtHC{}.
This question determines the likely large-scale memory behavior of the model.

If such irregular fractal representatives are absent, then the optimistic expectation is that the smallest logical operators are the plane-supported logicals observed numerically.
In that scenario the distance would scale as
\begin{equation}
d=\Theta(L^2),
\end{equation}
up to the parity-dependent prefactors seen in the plane-representative data.
Moreover, a local error process that builds a membrane-like logical operator would generically carry a one-dimensional boundary of violated checks.
This makes a linear energy barrier,
\begin{equation}
\Delta E=\Theta(L),
\end{equation}
a plausible outcome.
Such behavior would realize a promising combination of properties for  a family of  QEC: Haah-type absence of strings, membrane logical operators, and a potentially linear barrier on a regular three-dimensional cubic geometry.

If irregular fractal logical representatives do exist, the conclusion would be different but still physically important.
Their existence would show that constrained spatial variation can destroy the algebraic self-similar recurrence of the uniform model while generating a more disordered form of fractal logical geometry.
Such operators would not be captured by translation-invariant polynomial methods. 
They would instead point to a new regime of spatially nonuniform stabilizer Hamiltonians, where fractal logical structure is controlled by local finite-field constraints rather than by exact translation symmetry.

\subsection{Open problems and future directions}

There are many open problems revolving around the \QtRCC{} family.
The first one is to determine which of these two scenarios above is true.
This requires either proving a lower bound that rules out non-plane logical representatives below membrane scale, or explicitly constructing irregular logical representatives of smaller support.
Both directions require methods beyond the plane and tube scans reported here, such as global Pauli-support optimization or analytic lower-bound techniques adapted to spatially varying stabilizer coefficients.

A second, related problem is the energy barrier and memory lifetime~\cite{Brown2016,BravyiHaah2013}.
The no-string theorem rules out constant-width string processes, and the observed plane representatives suggest membrane-like logical operators, but the optimal charge-creation history is not known.
Establishing whether the barrier is logarithmic, linear, or governed by an irregular fractal process is essential for assessing the passive-memory potential of \QtRCC.
Closely related questions include finite-temperature dynamics, free-energy barriers, and the construction of efficient decoders.

Furthermore, it is important to broaden the construction beyond the line-symmetric ensemble used in the numerical study.
Our sampled models are a $[111]$ line-symmetric subcollection of the full locally admissible family.
Constructing and sampling more general three-dimensional admissible coefficient fields would test whether the parity pattern, plane representatives, tube exclusion, and absence of self-similar recurrence persist without this restriction.
It would also clarify which features extend to related prime-qudit cubic models, Haah-code generalizations, and manifold extensions~\cite{LeeHuChoWatanabe2025,TianWang2019,TianSampertonWang2020}.

More broadly, \QtRCC{} suggests a route toward constrained-random local stabilizer codes on regular geometries, which potentially have improved (partial) self-correcting properties. 
It seems such QECs have been rarely studied so far.
The regular geometry and local stabilizer checks can make rigorous  proofs, direct numerical diagnostics, and potential experimental realization easier. 
On the many-body side, the curated examples combine finite, weakly size-dependent torus degeneracy of topological order with local fracton-like charge immobility inherited from the fracton codes.
This combination suggests a class of spatially nonuniform stabilizer phases whose classification is not captured by translation-invariant topological order or fracton orders alone.
Developing algebraic and field-theoretic frameworks, constructing  classification schemes for such phases, and exploring their quantum dynamics, are important directions for future work.

\begin{acknowledgments}
The author thanks Claudio Castelnovo, Dominic J. Williamson, Takumi Fukushima, Ryohei Kobayashi, and Masaki Oshikawa for helpful discussions.
This work was supported by JSPS KAKENHI Grant No.~26K17090, Grant-in-Aid for Early-Career Scientists.
\end{acknowledgments}

\section*{Data availability}
The data set for reproducing calculations on the curated models is available in the QTRCC repository at \url{https://github.com/hanyanphysics/QTRCC}~\cite{QTRCCRepository}.
It contains the complete definitions of the 374 curated periodic-lattice \QtRCC{} models with system sizes $10\le L\le25$.
Each model directory contains \texttt{z\_checks.csv} and \texttt{x\_checks.csv}, which specify the local $Z$-check and $X$-check parameters needed to reconstruct the CSS checks.
The accompanying \texttt{manifest.csv} lists the models and records their basic code parameters, including $\rank(H_Z)$, $\rank(H_X)$, and $k$.
The repository \texttt{README} gives the coordinate, qutrit-labeling, and matrix-reconstruction conventions.
The repository also includes scripts to check sampled models against the local topology and commutation constraints.

\bibliographystyle{apsrev4-2}
\bibliography{qtrcc}
\onecolumngrid
\appendix

\section{Local commutation equations and reduced construction}
\label{app:commutation}

This appendix records the algebraic commutation constraints used in the periodic finite-size construction.
Let $\delta=r_B-r_A$ be the displacement from a $Z$-check anchor to an $X$-check anchor.
14 of the 27 local overlaps give nontrivial commutation relations.
With the $B$ variables evaluated at the central $X$ anchor and the $A$ variables evaluated at the displaced $Z$ anchor, the equations are:
\begin{align}
\delta=(-1,0,1):&\quad v_C r_A+u_Ds_F=0,\label{eq:raw1}\\
\delta=(-1,1,0):&\quad v_B r_A+u_Ds_E=0,\\
\delta=(0,-1,1):&\quad v_C r_B+u_Es_F=0,\\
\delta=(0,0,0):&\quad v_A r_A+v_B r_B+v_C r_C+u_Ds_D+u_Es_E+u_Fs_F=0,\\
\delta=(0,0,1):&\quad v_Ct_2+2w_1s_F=0,\\
\delta=(0,1,-1):&\quad v_Br_C+u_Fs_E=0,\\
\delta=(0,1,0):&\quad v_Bt_2+2w_1s_E=0,\\
\delta=(0,1,1):&\quad u_Dt_1+w_2r_A=0,\\
\delta=(1,-1,0):&\quad v_Ar_B+u_Es_D=0,\\
\delta=(1,0,-1):&\quad v_Ar_C+u_Fs_D=0,\\
\delta=(1,0,0):&\quad v_At_2+2w_1s_D=0,\\
\delta=(1,0,1):&\quad u_Et_1+w_2r_B=0,\\
\delta=(1,1,0):&\quad u_Ft_1+w_2r_C=0,\\
\delta=(1,1,1):&\quad 2w_1t_1+w_2t_2=0.\label{eq:raw14}
\end{align}
Multiplying any equation by the nonzero scalar $2\in\F_3$ gives an equivalent constraint.

The seven anchor equations solve the $X$ field in terms of the $A$ field.
With the shorthand in Eq.~\eqref{eq:alpha_beta_rho}, the solution is
\begin{equation}
r_A=-\alpha_Dt_1,\quad r_B=-\alpha_Et_1,\quad r_C=-\alpha_Ft_1,
\end{equation}
\begin{equation}
\begin{aligned}
t_2&=\rho t_1, & s_D&=\rho\beta_A t_1, & s_E&=\rho\beta_B t_1, & s_F&=\rho\beta_C t_1.
\end{aligned}
\end{equation}
Substituting into the remaining seven equations gives the reduced $A$-field constraints
\begin{align}
C_1&=u_{D,(-1,0,1)}\rho\beta_C-v_{C,(-1,0,1)}\alpha_D=0,\\
C_2&=u_{D,(-1,1,0)}\rho\beta_B-v_{B,(-1,1,0)}\alpha_D=0,\\
C_3&=u_{E,(0,-1,1)}\rho\beta_C-v_{C,(0,-1,1)}\alpha_E=0,\\
C_4&=u_{F,(0,1,-1)}\rho\beta_B-v_{B,(0,1,-1)}\alpha_F=0,\\
C_5&=u_{E,(1,-1,0)}\rho\beta_A-v_{A,(1,-1,0)}\alpha_E=0,\\
C_6&=u_{F,(1,0,-1)}\rho\beta_A-v_{A,(1,0,-1)}\alpha_F=0,\\
C_7&=u_{D,(0,0,0)}\rho\beta_A+u_{E,(0,0,0)}\rho\beta_B
+u_{F,(0,0,0)}\rho\beta_C\\
&\quad -v_{A,(0,0,0)}\alpha_D-v_{B,(0,0,0)}\alpha_E
-v_{C,(0,0,0)}\alpha_F=0.
\end{align}
Together with topology and the nonzero-exponent condition, these equations define the periodic finite-size coefficient constraints.

\section{Detailed proof of the no-string theorem}
\label{app:nostringproof}

Theorem~\ref{thm:nostring} is stated in Sec.~\ref{sec:nostringmain}.

\setcounter{theorem}{0}
\begin{theorem}[No fixed-width logical strings] 
	Consider a locally admissible \QtRCC{} coefficient field on $\Z^3$.
	Then there is a finite function $\phi(w)$ such that every nontrivial logical string segment of width $w$ has length at most $\phi(w)$.
	With the effective-width convention $W=w+4$, the proof gives the conservative bound
	\begin{equation}
		\phi(w)\le 27W+48=27(w+4)+48 . 
	\end{equation}
	The same statement holds on the periodic lattice $\Lambda_L$ for string segments whose support and anchor collar are contained in a contractible no-wraparound region.
	It does not forbid noncontractible plane-supported logical operators on the periodic lattice.
\end{theorem}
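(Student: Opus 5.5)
We follow Haah's original argument for Code~1 and check each local algebraic step over $\F_3$ with spatially varying nonzero coefficients. Since the code is CSS, a logical string segment $P=P_XP_Z$ is nontrivial only if one of its components is, and each component commutes separately with the checks of the opposite type away from the anchors. We therefore treat an $X$-type segment $P_X$; the $Z$-type case follows by the reflection $r\mapsto -r$, which swaps the $O$/$G$ roles and the $A,B,C$/$D,E,F$ roles of $A_r$ and $B_r$.

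We fix a width-$w$ segment with anchors $\Omega_1,\Omega_2$. We enlarge the anchors by a collar of thickness two on each side, which gives effective width $W=w+4$. The segment is then reduced in three stages.

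\emph{Corner erasing.} At an exposed extremal site $s$ of the support, in a given diagonal orientation, exactly one $Z$-check $A_r$ overlaps the support, and it overlaps only at $s$. Commutation with that check forces the exponent pair at $s$ into the kernel of a single linear form on $\F_3^2$. Which form depends on the corner type. At a one-qutrit corner the form has one nonzero coefficient, so one exponent must vanish. At the two-qutrit corner $G$ the kernel is spanned by $(w_2,-w_1)$. By the anchoring equation $2w_1t_1+w_2t_2=0$ this is proportional to $(t_1,-t_2)$, the $O$-corner vector of the $B$-check anchored at $s$. Multiplying by a suitable power of that $B$-check, which lies inside the convex hull of the support, therefore removes the site. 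The same holds at the one-qutrit corners.

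\emph{Good-edge erasing.} For an exposed edge of length two, the overlapping $A$-check imposes a single two-site equation. Its kernel is spanned by two corner-erasable directions together with the edge direction of a $B$-check, and that third direction lies in the kernel by the two-term relations $C_1,\dots,C_6$ (equivalently the raw equations \eqref{eq:raw1}--\eqref{eq:raw14}). Repeating corner and edge erasing never enlarges the support beyond its bounding box. Following Haah, it shrinks the support to the union of at most three axis-parallel corridors, and then to two flat rectangles of thickness one and width at most $W$ joining the collared anchors.

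\emph{Confusing constraint.} This is the main obstacle, because Haah's version exploits $\F_2$ specifics. Along the outer edge of a flat rectangle the commutation equations are two-site linear equations $a_ix^{(2)}_i+b_ix^{(1)}_{i+1}=0$ together with one-site equations $c_ix^{(2)}_i=0$, where all coefficients $a_i,b_i,c_i$ are nonzero. Ordered along the edge, these form a triangular system. We plan to prove by induction that, apart from a transient of boundedly many sites near each anchor, every edge exponent vanishes, using invertibility of every coefficient in $\F_3$. Removing that row shortens the next row's undetermined interval by a bounded amount. Iterating over the at most $W$ rows, with a constant per row of $27$ coming from the $3\times3\times3$ neighbourhood of a check, leaves an empty middle interval whenever the length exceeds $27W+48$. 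The resulting representative is disconnected, a contradiction, so $\phi(w)\le 27W+48$.

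On the torus, every move above is local and stays inside the support region. The argument therefore applies verbatim in a contractible, non-wrapping region. It says nothing about noncontractible planes, because those have no extremal corners from which to begin the erasing.
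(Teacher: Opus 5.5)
Your local algebra matches the paper's proof. You identify the mixed $G$-corner kernel with the $O$-corner vector of $B_s$ via $w_1t_1=w_2t_2$. You close the good-edge kernel with the six two-term commutation equations. You run the triangular confusing constraint with invertible coefficients. These verifications are correct.

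\textbf{The CSS reduction.} Your opening claim that ``$P=P_XP_Z$ is nontrivial only if one of its components is'' does not follow from the definitions. Triviality means some equivalent representative is disconnected. The site support of $P_X'P_Z'$ is the union of the two site supports. So if $P_X'$ and $P_Z'$ are each disconnected, but by cuts in different places, the union can still contain a path between the anchors. Cuts in different places are natural here: each comes from a corridor skeleton built from that component's own support, possibly in a different leg. Hence bounding $X$-type and $Z$-type segments separately does not bound mixed segments. The paper avoids this in three steps:
\begin{itemize}
\item It chooses Haah's corridor skeleton from the site support of the full $P$.
\item It cleans $P_X$ with $X$-checks and $P_Z$ with $Z$-checks inside this common region and the same anchor collar.
\item It notes that the row induction confines \emph{each} component to within the fixed distance $D_{\rm flat}$ of the anchors along a long leg.
\end{itemize}
Both components then vanish on the same middle interval, so the product is disconnected. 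You need this simultaneous-cut argument or an equivalent one.

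\textbf{The constant.} Your bookkeeping does not produce $27W+48$. A per-row loss of $27$ ``from the $3\times3\times3$ neighbourhood'', accumulated over $W$ rows at each anchor, would already force a leg of length about $W+2\cdot 27W$. You also never use the fact that the $\ell_1$ length is shared among up to three coordinate legs.

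In the paper, the per-row transient is $4$: row $m$ vanishes beyond distance $4m+4$. This gives $D_{\rm flat}=4W+8$ and $L_{\rm flat}=W+2D_{\rm flat}=9W+16$ for a single leg. The factor $27$ is $3\times 9$ from pigeonholing over the legs. If $\ell>3L_{\rm flat}=27W+48$, some leg exceeds $L_{\rm flat}$. Emptying that leg's middle disconnects the representative, because the other legs meet it only in bounded junction regions near its ends.

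Your argument still gives a finite $\phi(w)$, but with a larger, unjustified linear constant. To get the stated bound, redo the count along these lines and add the three-leg pigeonhole step.
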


This appendix gives the detailed proof.
The proof follows the support-geometry deformation and erasing argument for Haah's Code 1~\cite{Haah2011,HaahThesis2013}; the model-dependent part is the qutrit coefficient algebra.
We first recall Haah's terminology and proof structure, then verify that the three local ingredients used in the main-text sketch remain valid for \QtRCC.
The argument is local and deterministic: it uses the cube support pattern of Haah's Code 1, the reflected CSS structure of Eqs.~\eqref{eq:Zcheck} and \eqref{eq:Xcheck}, local $A$--$B$ commutation, and the assumption that all coefficients are nonzero.
It does not use the topology identity, the $[111]$ line symmetry, or any property of the sampling procedure.

\subsection{Haah logical string segments}
\label{subsec:haah-string-definition}

We use Haah's definition of a logical string segment.
A width-$w$ anchor is a cube of $w^3$ lattice sites.
Let $\Omega_1$ and $\Omega_2$ be two congruent anchor cubes and let $P$ be a finite Pauli operator.
The triple
\begin{equation}
\zeta=(P,\Omega_1,\Omega_2)
\end{equation}
is a string segment if every stabilizer check that acts trivially on both anchors commutes with $P$.
Stabilizer violations are therefore allowed only at the two anchors.
If the displacement from $\Omega_1$ to $\Omega_2$ is $(a,b,c)$, then the length of the segment is
\begin{equation}
\ell(\zeta)=|a|+|b|+|c|,
\end{equation}
and its width is $w$.

Two string segments with the same anchors are equivalent if their Pauli operators differ by multiplication by finitely many stabilizer checks.
A representative is connected if the site support of its Pauli operator, together with the two anchors, contains a lattice path joining the anchors.
Here a site is counted in the support if either of its two qutrits carries a nonzero Pauli exponent.
The segment is nontrivial if every equivalent representative remains connected.
Equivalently, a segment is disconnected if multiplication by a finite product of stabilizer checks gives a representative whose support has no path connecting the two anchors.
Let $\phi(w)$ be the maximum length of a nontrivial width-$w$ string segment, allowing the value $\infty$.
A stabilizer model is free of string logical operators if
\begin{equation}
\phi(w)<\infty
\end{equation}
for every finite $w$.

Because every cube check has diameter one, we use the enlarged anchor collar
\begin{equation}
\Omega^+=\{s:\operatorname{dist}_\infty(s,\Omega_1\cup\Omega_2)\le 1\}.
\label{eq:anchor-collar}
\end{equation}
All check multiplications used below are supported away from $\Omega^+$.
Thus the deformation changes only the representative of the string segment and never changes the allowed anchor regions.

\subsection{Haah's Code 1 proof structure}
\label{subsec:haah-logic-flow}

Haah's Code 1 no-string proof is by contradiction: one assumes a nontrivial width-$w$ logical string segment and then constructs an equivalent disconnected representative.
The geometric part of the proof is support-contained, meaning that the checks used in the deformation do not create support outside the chosen string region or inside the anchor collar.

For Haah's Code 1, Haah established a support-contained geometric reduction using corner erasing, good-edge erasing, and the exposed-edge confusing constraint.
We import this geometric reduction.
The model-dependent task here is not to rederive the support geometry, but to verify that each local algebraic move used in that reduction remains valid after the binary coefficients are replaced by nonzero spatially varying elements of $\F_3$.
These verifications are given in the next three subsections.

For example, a flat corridor directed along $\hat y$ can be reduced, away from the anchors, to two rectangles of thickness one,
\begin{align}
R_z&=\{(x,y,z):0\le x\le W-1,\ y_0\le y\le y_1,\ z=0\},
\nonumber\\
R_x&=\{(x,y,z):x=0,\ y_0\le y\le y_1,\ 0\le z\le W-1\},
\label{eq:two-rectangle-normal-form}
\end{align}
where $W=w+4$ is a conservative effective width.

Once the support has been reduced to thickness-one rectangles, the final step in Haah's proof is the exposed-edge confusing constraint.
We describe the argument for an $X$-type Pauli component; the $Z$-type argument follows by exchanging $X$ and $Z$ and using the reflected check pattern.
After the support is already in the two-rectangle normal form, one studies an exposed straight boundary edge of a thin rectangle.
In Haah's Code 1, the local commutation-with-$Z$-checks constraints along that edge mean that every pair of neighboring sites on the edge has only four binary possibilities
\begin{equation}
\mathrm{II}\!-\!\mathrm{II},
\quad \mathrm{XI}\!-\!\mathrm{II},
\quad \mathrm{IX}\!-\!\mathrm{XI},
\quad \mathrm{XX}\!-\!\mathrm{XI}.
\label{eq:haah-binary-confusing-list}
\end{equation}
Consistency along the edge for consecutive three sites forces the transitions
\begin{equation}
\mathrm{II}\to \mathrm{II},
\quad
\mathrm{XI}\to \mathrm{II},
\quad
\mathrm{IX}\to \mathrm{XI}\to \mathrm{II},
\quad
\mathrm{XX}\to \mathrm{XI}\to \mathrm{II}.
\label{eq:haah-binary-confusing-transition}
\end{equation}
Thus the exposed edge is eventually identity away from the anchor.
Haah calls such an exposed-edge condition a confusing constraint.
Repeating the same argument row by row gives a triangular erased region of the support: each time the induction moves one row inward, the forced identity interval begins one site farther from the anchor than it did on the previous edge.
If the flat segment is long compared with its width, the erased region can cut through the segment, turning it into two disconnected pieces.
This concludes the proof by contradiction.

This geometric procedure is summarized in Fig.~\ref{fig:nostringproof} in Sec.~\ref{sec:nostringmain}.
We now prove in detail that the same corner-erasing, good-edge-erasing, and confusing-constraint mechanisms remain valid for the \QtRCC{} checks.

\subsection{Corner erasing for qutrit coefficients}
\label{subsec:qtrcc-corner-erasing}

\begin{lemma}[Corner erasing for \QtRCC]
\label{lem:qtrcc-corner-erasing}
Every support-contained corner-erasing move used in Haah's Code 1 proof has a valid \QtRCC{} analogue.
The only local cube corner that cannot serve as an erasing corner is the identity operator corner.
\end{lemma}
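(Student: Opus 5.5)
Corner erasing is a purely local statement, and I will prove it by a case analysis over the eight corners of a single cube, done first for an $X$-type operator; the $Z$-type case then follows by the reflection $O\leftrightarrow G$. The setting is as in Haah's Code~1 proof. Suppose the current representative $P_X$ has support contained in a region $R$, and some site $s\in R$ is an exposed corner. This means there is a cube anchor $r$ with $s=r+\mu$ for some corner label $\mu$, such that the $Z$-check $A_{r'}$ whose support meets $R$ only at $s$ is a single check. Commutation of $P_X$ with that one-corner-overlap check gives one linear equation over $\F_3$ on the exponent pair $(x_1,x_2)\in\F_3^2$ of $P_X$ at $s$. The task is to show that the admissible pairs are exactly the restrictions to $s$ of the $X$-check $B_r$, whose cube is geometrically contained in $R$. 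Multiplying $P_X$ by a suitable power $B_r^{-c}$ then kills the exponent at $s$ without leaving $R\cup\{s\}$ minus $s$, and without touching the anchor collar $\Omega^+$ of Eq.~\eqref{eq:anchor-collar}.

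The first step is bookkeeping. For each $\mu\in\{O,A,\dots,G\}$, I read off from Eqs.~\eqref{eq:Zcheck} and \eqref{eq:Xcheck} which $Z$-check touches $s$ from the outside and with which coefficients. The geometry (which check is the lone overlapping one) is identical to Haah's Code~1, so I import it. Only the coefficient algebra is new.

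The cases then split into three types.
\begin{itemize}
\item One-qutrit corners $A,B,C$ (second qutrit) and $D,E,F$ (first qutrit). The lone overlapping $Z$-check acts at $s$ through a single nonzero coefficient on one qutrit, and through its $G$-corner pair $(w_1,w_2)$ when the overlap is at the $Z$-check's $G$ corner. In the pure one-component case the constraint reads $c\,x_j=0$ with $c\in\F_3^*$. This is invertible and forces $x_j=0$, so the other qutrit is left free. That free component coincides with the nonzero single-qutrit component of $B_r$ at $\mu$, which is nonzero because $r_{A},\dots,s_F\in\F_3^*$ by Eq.~\eqref{eq:derive2}. It is therefore erased by $B_r$ raised to the power $-x_{j'}/(\text{coefficient})$.
\item The mixed two-qutrit corner $O$ of $B_r$. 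Here the external $Z$-check meets $s$ at its own $G$ corner, so the constraint is $w_1x_1+w_2x_2=0$ with $w_1,w_2\neq0$. Its kernel is one-dimensional. I must check that this kernel is spanned by $(t_1,-t_2)$. This is exactly the anchoring equation $\delta=(1,1,1)$ in Eq.~\eqref{eq:raw14}, namely $2w_1t_1+w_2t_2=0$, equivalently $t_2=\rho t_1$. So the kernel equals the span of $B_r$'s $O$-corner vector, and one multiplication erases it.
\item The corner $G$, where $B_r$ is the identity. If this corner were exposed, $B_r$ could not remove it. This is why it is the unique non-erasing corner, which is the second sentence of the lemma. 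I will verify that Haah's reduction never needs to erase through the $G$ corner of an $X$-check: in Code~1 the same corner is identity, and the support-geometry proof only invokes corners where the check is nontrivial.
\end{itemize}

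The main obstacle is the one-qutrit corner case. I must make sure that when the lone overlapping $Z$-check touches $s$ at a two-qutrit $G$ corner, the kernel still matches $B_r$'s restriction rather than being a different line. I would handle this by matching each of the seven anchoring equations of Appendix~\ref{app:commutation} with the corresponding corner. Equations $\delta=(0,1,1),(1,0,1),(1,1,0)$ should handle $A,B,C$; equations $\delta=(1,0,0),(0,1,0),(0,0,1)$ should handle $D,E,F$; and $\delta=(1,1,1)$ handles $O$. In each case I would check that the stated solution Eq.~\eqref{eq:derive2} is precisely the kernel condition. Nonvanishing of all coefficients then guarantees that the kernel is one-dimensional and nonzero. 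The $Z$-type version follows by the same table with $X\leftrightarrow Z$ and $O\leftrightarrow G$.
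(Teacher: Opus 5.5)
\textbf{The flawed step.} Your pure one-qutrit case and your mixed $O$-corner case match the paper's argument. In the first, the invertible constraint $c\,x_j=0$ leaves the other qutrit free, and the nonzero single-qutrit coefficient of the reflected $X$-check erases it. In the second, the one-dimensional kernel of $(w_1,w_2)$ contains $(t_1,-t_2)$ by the single-site equation $\delta=(1,1,1)$.

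What fails is the step you call the main obstacle. When the isolating $Z$-check meets $s$ at its $G$ corner, its kernel is the line spanned by $(w_2,-w_1)$. Both components of that vector are nonzero, so a one-qutrit restriction of $B_r$ never lies in it: $(w_1,w_2)\cdot(0,r_A)=w_2r_A\neq0$ and $(w_1,w_2)\cdot(s_D,0)=w_1s_D\neq0$.

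The six equations $\delta=(0,1,1),(1,0,1),(1,1,0),(1,0,0),(0,1,0),(0,0,1)$ cannot change this. Each one is the commutation of a $Z$-check and an $X$-check that overlap nontrivially on \emph{two} sites. For $\delta=(0,1,1)$, the $Z$-check's $D$ corner sits on the $X$-check's $O$ corner and its $G$ corner sits on the $X$-check's $A$ corner, giving $u_Dt_1+w_2r_A=0$. Such equations say nothing about a single-site kernel. So the verification you plan, that Eq.~\eqref{eq:derive2} is precisely the kernel condition, is false for the corners $A,\dots,F$.

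\textbf{The correct resolution.} This pairing simply never occurs. Exactly as in Code~1, where $XI,IX\notin\ker(ZZ)$, the support pattern fixes the matching $Z$-$G\leftrightarrow X$-$O$, $Z$-$\{D,E,F\}\leftrightarrow X$-$\{A,B,C\}$, and $Z$-$\{A,B,C\}\leftrightarrow X$-$\{D,E,F\}$. For the one-qutrit pairs no commutation equation is used at all, since $(\beta,0)\cdot(0,\alpha)=0$ holds identically; only $r_A,\dots,s_F\neq0$ matters. The anchoring equations other than $\delta=(1,1,1)$ play no role in this lemma. You should delete the main-obstacle paragraph and the clause in your one-qutrit bullet about the $G$-corner pair.

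\textbf{The omitted case.} The paper also treats the zero-kernel corner move. There, two distinct $Z$-checks isolate the same site with linearly independent corner vectors, which forces $\xi(p)=0$ without any check multiplication. Your case split by the corner label of $B_r$ does not reach this move. You need the easy remark that two corner vectors with different support patterns, such as $(\beta,0)$ and $(0,\alpha)$, or $(w_1,w_2)$ and a one-qutrit vector, stay linearly independent over $\F_3$ because every coefficient is nonzero.
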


\begin{proof}
We prove the statement for an $X$-type Pauli operator; the $Z$-type statement is obtained by exchanging $X$ and $Z$ and using the reflected check pattern.
Let $p$ be an exposed target site away from $\Omega^+$, and write the local $X$ exponent at $p$ as
\begin{equation}
\xi(p)=(a,b)\in\F_3^2 .
\end{equation}
Suppose a $Z$-check touches the current possible support only at $p$, with local $Z$-corner vector
\begin{equation}
z=(z_1,z_2)\in\F_3^2 .
\end{equation}
Commutation with this check gives
\begin{equation}
z\cdot \xi(p)=z_1a+z_2b=0 .
\label{eq:corner-local-constraint}
\end{equation}
If $z=(0,0)$, this equation gives no restriction.
This is the identity corner, and it cannot be used as an erasing corner.

Consider first a one-qutrit corner, for example
\begin{equation}
z=(0,\alpha),\qquad \alpha\in\F_3^* .
\end{equation}
Equation~\eqref{eq:corner-local-constraint} gives $b=0$, so the remaining allowed local exponent is $(a,0)$.
In the corresponding Haah corner-erasing move, the $X$-check has the same target site and a one-qutrit target corner in the remaining direction,
\begin{equation}
x=(\beta,0),\qquad \beta\in\F_3^* .
\end{equation}
This is a support-pattern statement: the $X$-check is the reflected check used in the same Haah move.
Multiplying by the check power $\lambda=-a/\beta$ cancels the target exponent.
The case $z=(\alpha,0)$ is the same with the two qutrit components exchanged.

Now consider a mixed two-qutrit corner,
\begin{equation}
z=(w_1,w_2),\qquad w_1,w_2\in\F_3^* .
\end{equation}
The allowed local exponent lies in the one-dimensional kernel
\begin{equation}
K_z=\{\xi\in\F_3^2:z\cdot \xi=0\}.
\end{equation}
For the mixed $G$ corner of a $Z$-check, use the $X$-check whose mixed $O$ corner coincides with the target site.
Its local $X$ vector is
\begin{equation}
x=(t_1,-t_2),\qquad t_1,t_2\in\F_3^* .
\end{equation}
The local mixed-corner $A$--$B$ commutation equation is
\begin{equation}
w_1t_1-w_2t_2=0,
\label{eq:mixed-corner-commutation}
\end{equation}
so $z\cdot x=0$.
Hence $x\in K_z$.
Since $x\ne0$ and $K_z$ is one-dimensional, $x$ spans $K_z$.
Therefore any allowed target exponent is $\xi(p)=\mu x$, and multiplying by the check power $-\mu$ erases it.

Finally, in a zero-kernel corner move two independent $Z$-checks isolate the same target site.
Their local corner vectors $z,z'\in\F_3^2$ are linearly independent, so the two equations
\begin{equation}
z\cdot \xi(p)=0,\qquad z'\cdot \xi(p)=0,
\end{equation}
force $\xi(p)=0$.
No check multiplication is then needed.

All checks used here have the same target site and the same non-target corner positions as in the corresponding move in Haah's Code 1.
Therefore the support-containment and anchor-avoidance parts of the move are unchanged.
\end{proof}

\subsection{Good-edge erasing for qutrit coefficients}
\label{subsec:qtrcc-good-edge-erasing}

The good $Z$ edges used to erase $X$-type support in the imported Haah deformation are the edges joining a qutrit-2-only $Z$ corner to a qutrit-1-only $Z$ corner:
\begin{equation}
\{C\!-\!D,\ C\!-\!E,\ B\!-\!D,\ B\!-\!F,\ A\!-\!E,\ A\!-\!F\}.
\label{eq:qtrcc-good-edges}
\end{equation}
At the two endpoints of such an edge, the $Z$-corner vectors have the form
\begin{equation}
(0,\alpha),
\qquad
(\beta,0),
\qquad
\alpha,\beta\in\F_3^* .
\label{eq:good-edge-endpoint-vectors}
\end{equation}
This is the qutrit version of the independence condition for a good edge.

\begin{lemma}[Good-edge kernel for qutrit coefficients]
\label{lem:qtrcc-good-edge-kernel}
Let an $X$-type Pauli have endpoint exponents
\begin{equation}
\xi(p)=(a,b),
\qquad
\xi(q)=(c,d)
\end{equation}
on a good $Z$ edge with endpoint $Z$ vectors $(0,\alpha)$ and $(\beta,0)$.
Then the local commutation constraint is
\begin{equation}
\alpha b+\beta c=0,
\label{eq:qtrcc-good-edge-constraint}
\end{equation}
and its kernel is spanned by
\begin{equation}
((1,0),(0,0)),\qquad
((0,0),(0,1)),\qquad
((0,\beta),(-\alpha,0)).
\label{eq:qtrcc-good-edge-kernel-basis}
\end{equation}
\end{lemma}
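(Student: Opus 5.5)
The plan is a direct linear-algebra computation over $\F_3$, in two parts: derive the constraint, then identify its kernel.

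\emph{Deriving the constraint.} The $Z$-check in question touches the $X$-type operator at the two edge endpoints $p$ and $q$. By the CSS commutation criterion Eq.~\eqref{eq:csscomm}, the check commutes with the operator if and only if the dot product of exponent vectors vanishes. The $Z$ check has local vector $(0,\alpha)$ at $p$ and $(\beta,0)$ at $q$, as in Eq.~\eqref{eq:good-edge-endpoint-vectors}. Against $\xi(p)=(a,b)$ and $\xi(q)=(c,d)$, the products are $(0,\alpha)\cdot(a,b)=\alpha b$ and $(\beta,0)\cdot(c,d)=\beta c$. Summing them gives Eq.~\eqref{eq:qtrcc-good-edge-constraint}.

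As in the corner-erasing lemma, I would make explicit the standing assumption from the imported Haah geometry. In the configuration where the good edge is used, this particular $Z$-check meets the current support only on the edge $\{p,q\}$. Its other six corners therefore contribute nothing, and the constraint is exactly two-site.

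\emph{Identifying the kernel.} The constraint is a single linear functional $(a,b,c,d)\mapsto \alpha b+\beta c$ on $\F_3^4$. It is nonzero because $\alpha,\beta\in\F_3^*$, so its kernel has dimension exactly $3$. I would then check the three listed vectors.
\begin{itemize}
\item $((1,0),(0,0))$ has $b=c=0$, so it lies in the kernel.
\item $((0,0),(0,1))$ also has $b=c=0$, so it lies in the kernel.
\item $((0,\beta),(-\alpha,0))$ gives $\alpha\beta+\beta(-\alpha)=0$, so it lies in the kernel.
\end{itemize}
For linear independence, the third vector is the only one with nonzero $b$-coordinate (namely $\beta\neq0$). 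The first two are independent because they occupy the distinct coordinates $a$ and $d$. Three independent vectors in a $3$-dimensional space span it, which completes the proof.

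The only real subtlety is the step that uses the invertibility of $\alpha$ and $\beta$. Nonzero coefficients are needed both for the functional to be nonzero, which gives dimension $3$ rather than $4$, and for the third vector to be nonzero and independent. This is exactly where the qutrit generalization relies on the admissibility condition that all coefficients lie in $\F_3^*$. The binary case recovers Haah's three-element basis with $\alpha=\beta=1$ and $-1=1$.

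I would close with a remark interpreting the basis. The first two directions are removed by corner erasing at the endpoints via Lemma~\ref{lem:qtrcc-corner-erasing}. The third, correlated direction is the one that the corresponding two-term $A$--$B$ equation (one of $C_1$ through $C_6$) shows is supplied by an $X$-check edge.
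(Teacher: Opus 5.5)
Your proposal is correct and follows essentially the same route as the paper: compute the commutation phase $(0,\alpha)\cdot(a,b)+(\beta,0)\cdot(c,d)=\alpha b+\beta c$, observe that it is one nontrivial linear equation on $\F_3^4$ and so has a three-dimensional kernel, and check that the three listed vectors lie in it and are linearly independent. Your explicit independence argument and your statement of the standing assumption that the check meets the support only on $\{p,q\}$ fill in details the paper leaves implicit.
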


\begin{proof}
The commutation phase is
\begin{equation}
(0,\alpha)\cdot(a,b)+(\beta,0)\cdot(c,d)=\alpha b+\beta c .
\end{equation}
Since $\alpha$ and $\beta$ are nonzero, Eq.~\eqref{eq:qtrcc-good-edge-constraint} is one nontrivial linear equation in four variables.
The three vectors in Eq.~\eqref{eq:qtrcc-good-edge-kernel-basis} satisfy the equation and are linearly independent, so they form a basis of the three-dimensional kernel.
\end{proof}

The first two basis vectors in Eq.~\eqref{eq:qtrcc-good-edge-kernel-basis} can be removed by the corner-erasing moves of Lemma~\ref{lem:qtrcc-corner-erasing}.
The third basis vector is the correlated direction along the good edge.
This direction is supplied by local $A$--$B$ commutation.

Indeed, suppressing anchor shifts and writing only the coefficients at the two overlapping checks, the six two-term commutation equations are
\begin{equation}
\begin{aligned}
C-D:&\quad v_C r_A+u_D s_F=0,\\
B-D:&\quad v_B r_A+u_D s_E=0,\\
C-E:&\quad v_C r_B+u_E s_F=0,\\
B-F:&\quad v_B r_C+u_F s_E=0,\\
A-E:&\quad v_A r_B+u_E s_D=0,\\
A-F:&\quad v_A r_C+u_F s_D=0 .
\end{aligned}
\label{eq:qtrcc-good-edge-two-term-commutation}
\end{equation}
For example, on the $C-D$ edge the $Z$ vectors are $(0,v_C)$ and $(u_D,0)$, while the overlapping $X$-check restricts to $(0,r_A)$ and $(s_F,0)$.
The equation $v_C r_A+u_D s_F=0$ says exactly that this $X$-check restriction lies in the correlated one-dimensional kernel spanned by $((0,u_D),(-v_C,0))$.
Since every coefficient is nonzero, the restriction is a nonzero multiple of that vector.
The other five edges are identical.

Thus the allowed two-site $X$ exponent on a good $Z$ edge has exactly three local freedoms.
Two are endpoint directions removed by corner erasing.
The third is the correlated edge direction, and the corresponding two-term $A$--$B$ commutation equation shows that it is the restriction of a two-site overlapping $X$ check.
Therefore the same local checks used in Haah's good-edge move span the entire allowed qutrit edge kernel.

\begin{lemma}[Good-edge erasing]
\label{lem:qtrcc-good-edge-erasing}
Every support-contained good-edge erasing move used in Haah's Code 1 proof has a valid \QtRCC{} analogue.
\end{lemma}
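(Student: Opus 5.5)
We reduce Lemma~\ref{lem:qtrcc-good-edge-erasing} to the two local facts already established: the kernel description of Lemma~\ref{lem:qtrcc-good-edge-kernel}, and the observation following it that each correlated kernel direction is the restriction of an overlapping $X$-check. We treat an $X$-type component; the $Z$-type case follows by exchanging $X$ and $Z$ and using the reflected pattern.

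In a Haah good-edge move, a $Z$-check $A_{r'}$ meets the current admissible support only along a good edge $p$--$q$ from the list in Eq.~\eqref{eq:qtrcc-good-edges}, with $p$ the qutrit-2-only corner and $q$ the qutrit-1-only corner. Both sites lie away from $\Omega^+$. Commutation with $A_{r'}$ then gives exactly Eq.~\eqref{eq:qtrcc-good-edge-constraint}. By Lemma~\ref{lem:qtrcc-good-edge-kernel}, we can write the local exponent pair uniquely as
\begin{equation}
(\xi(p),\xi(q))=a\,((1,0),(0,0))+d\,((0,0),(0,1))+\nu\,((0,\beta),(-\alpha,0)).
\end{equation}

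The move proceeds in three steps.

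\textbf{Step 1.} We identify the $X$-check $B_{r''}$ that Haah's move uses on this edge. Its displacement relative to $A_{r'}$ is one of the six two-term overlaps listed in Appendix~\ref{app:commutation}. By the corresponding equation in Eq.~\eqref{eq:qtrcc-good-edge-two-term-commutation}, the restriction of $B_{r''}$ to $(p,q)$ equals $c\,((0,\beta),(-\alpha,0))$ for some $c\in\F_3^*$. Here $c$ is nonzero because all $r$- and $s$-coefficients are nonzero by Eq.~\eqref{eq:derive2}.

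\textbf{Step 2.} We multiply the operator by $B_{r''}^{-\nu/c}$. This cancels the correlated component, leaving only $(a,0)$ at $p$ and $(0,d)$ at $q$.

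\textbf{Step 3.} We remove these two residual endpoint pieces by the corner-erasing moves of Lemma~\ref{lem:qtrcc-corner-erasing}, applied with the same checks Haah uses at those endpoints. These are one-qutrit-corner moves with invertible coefficients, so powers $-a/\beta'$ and $-d/\beta''$ of the appropriate $X$-checks cancel the residues.

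Support containment and anchor avoidance carry over unchanged. Every check we multiply has the same anchor and the same support sites as in the binary move, and only its exponents differ. Because coefficients are nonzero, the occupied corners are identical to those of Haah's Code 1, so no support is created outside the prescribed region or inside $\Omega^+$.

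The main obstacle is Step~1, together with the order of operations relative to Step~3. The step is safe only if the chosen $B_{r''}$ introduces no new support in the sites Haah's move requires to remain empty. We must also check that the endpoint corner erasures do not reintroduce a nonzero correlated component on the edge. We would settle this by noting two things. First, the residual endpoint vectors $(a,0)$ and $(0,d)$ lie in the first two kernel directions, which are independent of the correlated one. Second, the Haah ordering processes the edge and then its endpoints using checks whose other corners lie in already-erased or exterior-forbidden positions. We would verify this on one representative edge, $C$--$D$, and obtain the other five by the cubic symmetry of the support pattern.
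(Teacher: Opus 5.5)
Your proposal is correct and takes essentially the same route as the paper: decompose the allowed edge exponent with Lemma~\ref{lem:qtrcc-good-edge-kernel}, cancel the correlated direction with the edge-overlapping $X$-check from Eq.~\eqref{eq:qtrcc-good-edge-two-term-commutation}, remove the endpoint pieces by corner erasing, and inherit support containment from the unchanged check supports. Your explicit ordering check goes beyond the paper, which only asserts support containment; note that this point is settled by which edge sites the endpoint-erasing $X$-checks touch, not by the linear independence of the kernel directions.
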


\begin{proof}
By Lemma~\ref{lem:qtrcc-good-edge-kernel}, the two-site restriction allowed by commutation with the relevant $Z$-check is the span of two endpoint operators and one correlated edge operator.
The endpoint operators are erased by Lemma~\ref{lem:qtrcc-corner-erasing}.
The correlated edge operator is the support of an edge-overlapping $X$-check by Eq.~\eqref{eq:qtrcc-good-edge-two-term-commutation}.
Hence the support-contained $X$-checks span the entire allowed edge kernel.
Multiplying by suitable check powers erases the edge.
Because the check supports occupy the same cube corners as in Haah's Code 1, the support-containment condition is unchanged.
\end{proof}

The reflected statement for erasing $Z$-type support follows by exchanging $X$ and $Z$ and using the reflected check pattern.

Lemmas~\ref{lem:qtrcc-corner-erasing} and \ref{lem:qtrcc-good-edge-erasing} are the coefficient-weighted lift of Haah's local erasing moves.
Therefore the support deformation used in Haah's Code 1, first to three flat coordinate segments and then to the two-rectangle normal form of Eq.~\eqref{eq:two-rectangle-normal-form}, can be performed in \QtRCC{} with the same support geometry.

\subsection{Confusing constraints on a thin rectangle}
\label{subsec:qtrcc-confusing-constraint}

It remains to generalize the last stage of Haah's proof: the confusing constraint on the exposed edge of a thickness-1 flat rectangle segment of the operator.
We first discuss an $X$-type Pauli operator.
The corresponding statement for a $Z$-type Pauli operator is obtained by exchanging $X$ and $Z$ and using the reflected check pattern.

Consider a $\hat y$-directed flat segment in the two-rectangle normal form \eqref{eq:two-rectangle-normal-form}.
Let
\begin{equation}
p_j=(W-1,j,0),\qquad \xi(p_j)=(a_j,b_j).
\end{equation}
These sites form the exposed edge of $R_z$ with largest $x$ and smallest $z$.
For $j$ outside the enlarged anchors, take
\begin{equation}
r_j=p_j-C,
\qquad
r'_j=p_{j+1}-B .
\end{equation}
The check $A_{r_j}$ has its $C$ corner at $p_j$ and its $D$ corner at $p_{j+1}$, so the two relevant $Z$ vectors are
\begin{equation}
(0,v_C(r_j)),\qquad (u_D(r_j),0).
\end{equation}
All other nonidentity corners of $A_{r_j}$ lie outside $R_z\cup R_x$ in the flat normal form.
Hence commutation with the $X$-type operator gives
\begin{equation}
v_C(r_j)b_j+u_D(r_j)a_{j+1}=0 .
\label{eq:qtrcc-Rz-confusing-first}
\end{equation}
The check $A_{r'_j}$ has its $B$ corner at $p_{j+1}$ and its $O$ corner at $p_j$.
The $O$ corner of a $Z$-check is identity, and all other nonidentity corners lie outside $R_z\cup R_x$.
Therefore
\begin{equation}
v_B(r'_j)b_{j+1}=0 .
\label{eq:qtrcc-Rz-confusing-second}
\end{equation}

Similarly, on the exposed edge of $R_x$ let
\begin{equation}
q_j=(0,j,W-1),\qquad \xi(q_j)=(\tilde a_j,\tilde b_j).
\end{equation}
Take
\begin{equation}
s_j=q_j-A,
\qquad
s'_j=q_{j+1}-B .
\end{equation}
The check $A_{s_j}$ has its $A$ corner at $q_j$ and its $F$ corner at $q_{j+1}$, giving
\begin{equation}
v_A(s_j)\tilde b_j+u_F(s_j)\tilde a_{j+1}=0 .
\label{eq:qtrcc-Rx-confusing-first}
\end{equation}
The check $A_{s'_j}$ has its $B$ corner at $q_{j+1}$, while its $O$ corner at $q_j$ is identity and the remaining nonidentity corners are outside $R_z\cup R_x$.
Hence
\begin{equation}
v_B(s'_j)\tilde b_{j+1}=0 .
\label{eq:qtrcc-Rx-confusing-second}
\end{equation}
All coefficients in these equations are nonzero by admissibility.
For the exposed edge of $R_z$, the two local constraints can be summarized as
\begin{center}
\begin{tabular}{c c c c}
\toprule
check & edge sites touched & nonzero corner vectors & constraint \\
\midrule
$A_{p_j-C}$ & $p_j,p_{j+1}$ & $(0,v_C),(u_D,0)$ & $v_C b_j+u_D a_{j+1}=0$ \\
$A_{p_{j+1}-B}$ & $p_{j+1}$ & $(0,v_B)$ & $v_B b_{j+1}=0$ \\
\bottomrule
\end{tabular}
\end{center}
where the coefficients are evaluated at the check anchors displayed in the first column.
The table makes explicit the triangular structure of the confusing constraint: one check touches two consecutive edge sites, while the shifted check touches only the second site because the $O$ corner of a $Z$-check is identity.

\begin{lemma}[Confusing constraint for qutrit coefficients]
\label{lem:qtrcc-confusing}
Suppose an exposed-edge sequence satisfies
\begin{equation}
\alpha_j b_j+\beta_j a_{j+1}=0,
\qquad
\gamma_j b_{j+1}=0,
\end{equation}
with $\alpha_j,\beta_j,\gamma_j\in\F_3^*$.
Then the sequence is eventually zero: whenever the shifted equations are available,
\begin{equation}
b_{j+1}=0,
\qquad
a_{j+2}=0,
\qquad
b_{j+2}=0 .
\end{equation}
\end{lemma}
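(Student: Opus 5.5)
The plan is to read the two families of equations as a triangular recurrence over the field $\F_3$ and to exploit only the invertibility of nonzero coefficients. The key fact is that every nonzero element of $\F_3$ has an inverse, namely $1^{-1}=1$ and $2^{-1}=2$. So $\gamma x=0$ with $\gamma\in\F_3^*$ forces $x=0$, and $\alpha x+\beta y=0$ with $\beta\in\F_3^*$ forces $y=-\beta^{-1}\alpha x$. No property of the coefficient field beyond membership in $\F_3^*$ is needed; admissibility enters only through this nonzero-coefficient condition, as recorded just before the statement.

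The steps, carried out in order, are as follows.
\begin{enumerate}
\item Apply the second equation at index $j$. From $\gamma_j b_{j+1}=0$ and invertibility of $\gamma_j$, conclude $b_{j+1}=0$.
\item Apply the first equation at index $j+1$, namely $\alpha_{j+1}b_{j+1}+\beta_{j+1}a_{j+2}=0$. Substituting $b_{j+1}=0$ gives $\beta_{j+1}a_{j+2}=0$, hence $a_{j+2}=0$.
\item Apply the second equation at index $j+1$, $\gamma_{j+1}b_{j+2}=0$, which gives $b_{j+2}=0$.
\end{enumerate}
The phrase ``whenever the shifted equations are available'' will be interpreted as requiring that the checks $A_{p_{j+1}-C}$ and $A_{p_{j+2}-B}$ also lie away from the enlarged anchor collar $\Omega^+$.

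To obtain ``eventually zero,'' run the same three steps inductively in $j$. Once $b_{j+1}=0$ holds, each further index $j'$ for which the equations are available has $\xi(p_{j'})=(a_{j'},b_{j'})=(0,0)$. The only possibly nonzero entries near the anchor are $a_{j_0}$, $b_{j_0}$ and $a_{j_0+1}$, where $j_0$ is the first index with available equations. Beyond this transient of at most two sites the edge is identity. This is the qutrit analogue of Haah's transitions $\mathrm{IX}\to\mathrm{XI}\to\mathrm{II}$ in Eq.~\eqref{eq:haah-binary-confusing-transition}, with coefficient weights in place of binary entries.

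I would then remark that the same recurrence applies verbatim to the $R_x$ edge via Eqs.~\eqref{eq:qtrcc-Rx-confusing-first}--\eqref{eq:qtrcc-Rx-confusing-second}, and to the $Z$-type sector via the reflected pattern.

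The algebra is trivial; the only real obstacle is bookkeeping. One must confirm that the index ranges on which both equations hold are long enough, because they fail within distance about one of $\Omega^+$ on each side. This is where the effective width $W=w+4$ and the per-row shortening feed into the bound $\phi(w)\le 27W+48$. For the lemma itself it suffices to state the conclusion for every $j$ with $j$ and $j+1$ in the available range.
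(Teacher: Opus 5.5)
Your proposal is correct and matches the paper's proof step for step. Invertibility of $\gamma_j$ gives $b_{j+1}=0$; the first equation shifted to index $j+1$, with $\beta_{j+1}\neq 0$, gives $a_{j+2}=0$; the shifted second equation gives $b_{j+2}=0$; iterating then yields eventual vanishing, and your added bookkeeping (only $a_{j_0}$, $b_{j_0}$, $a_{j_0+1}$ can survive, mirroring $\mathrm{IX}\to\mathrm{XI}\to\mathrm{II}$) is a correct, slightly sharper description of the bounded transient the paper mentions.
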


\begin{proof}
Since $\gamma_j\ne0$, the second equation gives
\begin{equation}
b_{j+1}=0 .
\end{equation}
Shifting the first equation by one site gives
\begin{equation}
\alpha_{j+1}b_{j+1}+\beta_{j+1}a_{j+2}=0 .
\end{equation}
Because $b_{j+1}=0$ and $\beta_{j+1}\ne0$, this implies
\begin{equation}
a_{j+2}=0 .
\end{equation}
Shifting the second equation gives $b_{j+2}=0$.
Therefore the site $p_{j+2}$ carries zero $X$ exponent.
Repeating the same argument shows that the exposed-edge sequence is identically zero beyond a bounded distance from the anchor.
This is the qutrit version of Haah's binary transition rule \eqref{eq:haah-binary-confusing-transition}.
\end{proof}

We have established the qutrit analogue of Haah's binary confusing constraint.
The exposed edge is not cleaned by freely multiplying checks along the edge.
Rather, the local commutation equations themselves are triangular: one equation forces a second-qutrit exponent on the next site to vanish, and the shifted two-site equation then forces the first-qutrit exponent one site farther along the edge to vanish.
Because all coefficients are nonzero in $\F_3$, the same eventual-vanishing operator conclusion holds with spatially varying coefficients.

Repeating the argument row by row gives the triangular erased region.
When one moves one row inward from an exposed edge, the additional non-target terms in the selected check equations lie only in the previous row, with a bounded shift along the $y$ direction.
Once the previous row has already vanished outside a distance $D$ from the anchors, the next row again obeys the exposed-edge equations outside distance $D+O(1)$, and Lemma~\ref{lem:qtrcc-confusing} forces it to vanish after another bounded transient.
The following row-by-row estimate is the same geometric estimate used in Haah's proof, applied after replacing the binary confusing constraint by Lemma~\ref{lem:qtrcc-confusing}.
The constants are deliberately conservative; no optimized constant is needed for the no-string theorem.
With this conservative estimate, row $m$ is zero outside distance $4m+4$ from either anchor.
Across at most $W$ rows on the two rectangles, all flat-segment support is confined within distance
\begin{equation}
D_{\rm flat}=4W+8
\label{eq:qtrcc-Dflat}
\end{equation}
from either anchor along the direction of the segment.
Therefore a flat segment of length
\begin{equation}
L_{\rm flat}=W+2D_{\rm flat}=9W+16
\label{eq:qtrcc-Lflat}
\end{equation}
has an empty middle interval and is equivalent to a disconnected representative.
The constants in Eqs.~\eqref{eq:qtrcc-Dflat} and \eqref{eq:qtrcc-Lflat} are not optimized; only finiteness at fixed width is needed.

\subsection{Proof of the no-string theorem}
\label{subsec:qtrcc-nostring-proof}

Having shown that Haah's local moves and the confusing constraint all have valid \QtRCC{} analogues, we can now complete the proof of Theorem~\ref{thm:nostring}.

\begin{proof}[Proof of Theorem~\ref{thm:nostring}]
First consider an $X$-type logical string segment.
By Haah's Code 1 support-deformation argument, the segment can be reduced, using support-contained corner and good-edge erasing moves, to at most three coordinate-parallel flat segments of effective width $W=w+4$.
By Lemmas~\ref{lem:qtrcc-corner-erasing} and \ref{lem:qtrcc-good-edge-erasing}, each of these local moves has a valid \QtRCC{} counterpart with the same support containment.
Thus the same flat reduction is valid in the qutrit model.

For each flat segment, Lemma~\ref{lem:qtrcc-confusing} and the row induction above imply that any flat leg longer than $L_{\rm flat}=9W+16$ is equivalent to one with an empty middle interval.
Such a leg disconnects the representative, because the other two coordinate legs meet it only within bounded junction regions near its ends.
Hence if the total $\ell_1$ length exceeds
\begin{equation}
3L_{\rm flat}=27W+48,
\end{equation}
at least one of the three coordinate legs disconnects the segment.
Therefore a nontrivial $X$-type string segment must satisfy
\begin{equation}
\ell\le 27W+48 .
\end{equation}

The $Z$-type proof is the reflected CSS version of the same argument.
The $X$-checks have the reflected zero, mixed, and one-qutrit corner structure of the $Z$-checks, and all coefficients are nonzero.
Corner erasing, good-edge erasing, the flat reduction, and the confusing-constraint argument therefore apply after exchanging $X$ and $Z$.

Finally, a general Pauli operator decomposes, up to phase, as
\begin{equation}
P=P_XP_Z .
\end{equation}
Because the code is CSS, $P_X$ is constrained by the $Z$-checks and $P_Z$ by the $X$-checks.
Choose the Haah corridor skeleton from the site support of the full operator $P$.
The $X$-type component is cleaned by $X$-checks using commutation with the $Z$-checks, while the $Z$-type component is cleaned by $Z$-checks using the reflected argument with the $X$-checks.
The two cleanings use the same corridor geometry and the same anchor collar.
Therefore, when a flat leg is longer than $L_{\rm flat}$, both Pauli components vanish in the same middle interval.
The product $P_XP_Z$ is then equivalent to a representative whose site support is disconnected.
Hence every nontrivial width-$w$ segment obeys Eq.~\eqref{eq:nostring-bound}.
\end{proof}

\end{document}